	\newtheorem{assumption}{\textbf{Assumption}}
	\newtheorem{definition}{\textbf{Definition}}
	\newtheorem{theorem}{\textbf{Theorem}}
	\newtheorem{remark}{\textbf{Remark}}
	\newtheorem{problem}{\textbf{Problem}}
\newcommand{\T}{^{\mbox{\tiny T}}}
\newcommand{\R}{\mathbb{R}}
\newcommand{\eps}{\varepsilon}
\newenvironment{Protocol}[1][htb]
{
	\begin{algorithm}[#1]%
	}{\end{algorithm}}
\let\leq\leqslant
\let\geq\geqslant
\newenvironment{proof}[1][Proof]%
{\par\noindent\textit{#1:\ }}%
{\hspace*{\fill} \rule{6pt}{6pt}}
\newenvironment{proof*}[1][Proof]%
{\par\noindent\textit{#1:\ }}{}
\DeclareMathOperator{\sat}{sat}
\DeclareMathOperator{\sgn}{sgn}
\newenvironment{system}[1]%
{\setlength{\arraycolsep}{0.5mm}\left\{ \; \begin{array}{#1}}%
	{\end{array} \right.}
\newenvironment{system*}[1]%
{\setlength{\arraycolsep}{0.5mm} \begin{array}{#1}}%
	{\end{array}}
\begin{document}

\title{\LARGE \textbf{Scale-free Linear Observer-based Protocol Design for Global Regulated State Synchronization of Homogeneous Multi-agent Systems with Non-introspective Agents Subject to Input Saturation}}
	\author{ Zhenwei Liu\aref{neu}, Donya Nojavanzadeh\aref{wsu}, Ali Saberi\aref{wsu},
	Anton A. Stoorvogel\aref{ut}}

\affiliation[neu]{College of Information Science and
	Engineering, Northeastern University, Shenyang 110819, China
		\email{liuzhenwei@ise.neu.edu.cn}}
\affiliation[wsu]{School of Electrical Engineering and Computer
	Science, Washington State University, Pullman, WA 99164, USA
	\email{donya.nojavanzadeh@wsu.edu; saberi@wsu.edu}}
\affiliation[ut]{Department of Electrical Engineering,
	Mathematics and Computer Science, University of Twente, Enschede, The Netherlands
	\email{A.A.Stoorvogel@utwente.nl}}

\maketitle

\begin{abstract}
   This paper studies global regulated state synchronization of homogeneous networks of non-introspective agents in presence of input saturation. We identify three classes of agent models which are neutrally stable, double-integrator, and mixed of double-integrator, single-integrator and neutrally stable dynamics. A \textit{scale-free linear observer-based} protocol design methodology is developed based on localized information exchange among neighbors where the reference trajectory is given by a so-called exosystem which is assumed to be globally reachable. Our protocols do not need any knowledge about the communication network topology and the spectrum of associated Laplacian matrix. Moreover, the proposed protocol is scalable and is designed based on only knowledge of agent models and achieves synchronization for any communication graph with arbitrary number of agents.
  \end{abstract}

\keywords{Multi-agent systems, Global regulated state synchronization, Scale-free protocol design}

\footnotetext{This work is supported by Nature Science
	Foundation of Liaoning Province under Grant 2019-MS-116.}

\section{Introduction}

The synchronization problem of networks consisting of linear or nonlinear agents has become a hot topic among researchers during the past decade due to the wide potential for
applications in several areas such as automotive vehicle control,
satellites/robots formation, sensor networks, and so on. The
objective of synchronization is to secure asymptotic agreement on a
common state or output trajectory by control
protocols with local communication information, see for
instance the books \cite{ren-book} and \cite{wu-book} or the
survey paper \cite{saber-murray3}.

Generally, synchronization of multi-agent system (MAS) includes two main types, state and output synchronization. Because the state synchronization inherently requires homogeneous networks
(i.e. agents which have identical dynamics), most work in synchronization for MAS focused on
state synchronization of homogeneous networks. State synchronization based on diffusive \emph{full-state coupling} has been studied where the agent dynamics progress from single- and double-integrator (e.g.  \cite{saber-murray2}, \cite{ren}) to more general dynamics (e.g. \cite{scardovi-sepulchre}, \cite{tuna1},
\cite{wieland-kim-allgower}). State synchronization based on
diffusive \emph{partial-state coupling} has also been considered, including static design (\cite{liu-zhang-saberi-stoorvogel-auto} and \cite{liu-zhang-saberi-stoorvogel-ejc}), dynamic design (\cite{kim-shim-back-seo}, \cite{seo-back-kim-shim-iet}, \cite{seo-shim-back}, \cite{su-huang-tac},
\cite{tuna3}), and the design based on localized information exchange (\cite{chowdhury-khalil} and \cite{scardovi-sepulchre}). Solvability conditions are studied for general case of full and partial-state coupling in \cite{stoorvogel-saberi-zhang-auto2017}, \cite{stoorvogel-saberi-zhang-liu}. Recently, scale-free collaborative protocol designs are developed for continuous-time heterogeneous MAS \cite{donya-liu-saberi-stoorvogel-ACC2020} and for homogeneous MAS subject to actuator saturation \cite{liu-saberi-stoorvogel-donya-cdc2019} and subject to input delays \cite{liu-donya-dmitri-saberi-stoorvogel-arxiv-inputdelay-con,liu-donya-dmitri-saberi-stoorvogel-arxiv-inputdelay-dis}.  

Meanwhile, if the agents have absolute measurements of
their own dynamics in addition to relative information from the network, they are said to be introspective, otherwise, they are called non-introspective. There exist some results about these two types of agents, for example, introspective agents (\cite{kim-shim-seo,yang-saberi-stoorvogel-grip-journal}, etc), and non-introspective agents (\cite{grip-yang-saberi-stoorvogel-automatica, wieland-sepulchre-allgower}, etc).

On the other hand, it is worth to note that actuator saturation is pretty common and indeed is ubiquitous in engineering
applications. Many researchers have tried to establish (semi) global state and output synchronization results for multi-agent system (MAS) in the presence of input saturation. Compared with semi-global results, global synchronization can work for any initial condition set, and thus it has wider applications and attracts more attention.  
Global synchronization for neutrally stable
agents has
been studied by \cite{meng-zhao-lin-2013} (continuous-time) and
\cite{yang-meng-dimarogonas-johansson} (discrete-time) for either undirected or detailed balanced graph. Then, global synchronization via static protocols for MAS with partial state coupling and linear general dynamics is developed in \cite{liu-saberi-stoorvogel-zhang-ijrnc}. Reference \cite{li-xiang-wei} provides the design which can deal with networks that are not
detailed balanced but intrinsically requires the agents to be single integrator. Similar scenarios also can be found in \cite{fu-wen-yu-ding} (finite-time consensus), and \cite{yi-yang-wu-johansson-auto} (event-triggered control).

In this paper, we design \textbf{scale-free linear observer-based} dynamic protocols to achieve global regulated state synchronization for homogeneous
networks of non-introspective agents in presence of input
saturation utilizing localized information exchange among the neighbors. The contributions of this paper are stated as follow.
\renewcommand\labelitemi{{\boldmath$\bullet$}}
\begin{itemize}
	\item We develop scale-free linear observer-based dynamic protocols for MAS with non-introspective agents and for three classes of agent models which are neutrally stable, double-integrator, and mixed of double-integrator, single-integrator and neutrally stable dynamics and for both networks with full- and partial-state coupling. Moreover, the proposed linear protocols have infinite gain margins.
	\item  Linear observer-based protocol designs are
	scale-free and do not need any information about communication network. In other words, the proposed protocols work for any MAS with any communication graph with arbitrary number of agents.
\end{itemize}

\subsection*{Notations and definitions}
Given a matrix $A\in \mathbb{R}^{m\times n}$, $A\T$ denotes the transpose of $A$ and $\|A\|$ denotes the induced 2-norm of $A$. For a vector $x\in \mathbb{R}^q$, $\|x\|$ denotes the 2-norm of $x$ respectively. A square matrix $A$ is said to be Hurwitz stable if all its eigenvalues are in the open left half complex plane.  $A\otimes B$ depicts the
Kronecker product between $A$ and $B$. $I_n$ denotes the
$n$-dimensional identity matrix and $0_n$ denotes $n\times n$ zero
matrix; sometimes we drop the subscript if the dimension is clear from
the context.


To describe the information flow among the agents we associate a \emph{weighted graph} $\mathcal{G}$ to the communication network. The weighted graph $\mathcal{G}$ is defined by a triple
$(\mathcal{V}, \mathcal{E}, \mathcal{A})$ where
$\mathcal{V}=\{1,\ldots, N\}$ is a node set, $\mathcal{E}$ is a set of
pairs of nodes indicating connections among nodes, and
$\mathcal{A}=[a_{ij}]\in \mathbb{R}^{N\times N}$ is the weighted adjacency matrix with non negative elements $a_{ij}$. Each pair in $\mathcal{E}$ is called an \emph{edge}, where
$a_{ij}>0$ denotes an edge $(j,i)\in \mathcal{E}$ from node $j$ to
node $i$ with weight $a_{ij}$. Moreover, $a_{ij}=0$ if there is no
edge from node $j$ to node $i$. We assume there are no self-loops,
i.e.\ we have $a_{ii}=0$. A \emph{path} from node $i_1$ to $i_k$ is a
sequence of nodes $\{i_1,\ldots, i_k\}$ such that
$(i_j, i_{j+1})\in \mathcal{E}$ for $j=1,\ldots, k-1$. A \emph{directed tree} is a subgraph (subset
of nodes and edges) in which every node has exactly one parent node except for one node, called the \emph{root}, which has no parent node. The \emph{root set} is the set of root nodes. A \emph{directed spanning tree} is a subgraph which is
a directed tree containing all the nodes of the original graph. If a directed spanning tree exists, the root has a directed path to every other node in the tree.  

For a weighted graph $\mathcal{G}$, the matrix
$L=[\ell_{ij}]$ with
\[
\ell_{ij}=
\begin{system}{cl}
\sum_{k=1}^{N} a_{ik}, & i=j,\\
-a_{ij}, & i\neq j,
\end{system}
\]
is called the \emph{Laplacian matrix} associated with the graph
$\mathcal{G}$. The Laplacian matrix $L$ has all its eigenvalues in the
closed right half plane and at least one eigenvalue at zero associated
with right eigenvector $\textbf{1}$ \cite{royle-godsil}. Moreover, if the graph contains a directed spanning tree, the Laplacian matrix $L$ has a single eigenvalue at the origin and all other eigenvalues are located in the open right-half complex plane \cite{ren-book}.

\section{Problem Formulation}

Consider a MAS consisting of $N$ identical dynamic agents with input saturation:
\begin{equation}\label{eq1}
\begin{cases}
\dot{x}_i=Ax_i+B\sigma(u_i),\\
y_i=Cx_i,
\end{cases}
\end{equation}
where $x_i\in\mathbb{R}^n$, $y_i\in\mathbb{R}^q$ and
$u_i\in\mathbb{R}^m$ are the state, output, and the input of agent 
$i=1,\ldots, N$, respectively. Meanwhile,
\[
\sigma(v)=\begin{pmatrix} 
\sat(v_1) \\ \sat(v_2) \\ \vdots \\ \sat(v_m)
\end{pmatrix}\text{ where }
v=\begin{pmatrix} 
v_1 \\ v_2 \\ \vdots \\ v_m
\end{pmatrix} \in \R^m
\]
with $\sat(w)$ is the standard saturation function:
\[
\sat(w)=\sgn(w)\min(1,|w|).
\]


The network provides agent $i$ with the following information,
\begin{equation}\label{eq2}
\zeta_i=\sum_{j=1}^{N}a_{ij}(y_i-y_j),
\end{equation}
where $a_{ij}\geq 0$ and $a_{ii}=0$. This communication topology of
the network can be described by a weighted graph $\mathcal{G}$ associated with \eqref{eq2}, with
the $a_{ij}$ being the coefficients of the weighted adjacency matrix
$\mathcal{A}$. In terms of the coefficients of the associated
Laplacian matrix $L$, $\zeta_i$ can be rewritten as
\begin{equation}\label{zeta_l}
\zeta_i = \sum_{j=1}^{N}\ell_{ij}y_j.
\end{equation}
We refer to \eqref{zeta_l} as \emph{partial-state coupling} since only part of
the states are communicated over the network. When $C=I$, it means all states are shared over the network and we call it \emph{full-state coupling}. 


We also introduce a localized
information exchange among neighbors. In particular, each agent 
$i=1,\ldots, N$ has access to a localized information denoted by
$\hat{\zeta}_i$, of the form
\begin{equation}\label{eqa1}
\hat{\zeta}_i=\sum_{j=1}^Na_{ij}(\xi_i-\xi_j)
\end{equation}
where $\xi_j\in\mathbb{R}^n$ is a variable produced internally by agent $j$ and to be defined in next sections.


In this paper, we consider regulated state synchronization where state of agents converge to a priori given trajectory $x_r$ generated by a so-called exosystem
\begin{equation}\label{solu-cond}
\dot{x}_r  = A x_r, \quad y_r=Cx_r.
\end{equation}
with $x_r\in\R^n$.
Clearly, we need some level of
communication between the exosystem and the agents.  We assume that a nonempty
subset $\mathscr{C}$ of the agents have access to their
own output relative to the output of the exosystem.  Specially, each
agent $i$ has access to the quantity
\begin{equation}\label{elp}
\psi_i=\iota_{i}(y_i-y_r),\qquad
\iota_i=
\begin{cases}
1, & i\in \mathscr{C},\\
0, & i\notin \mathscr{C}.
\end{cases}
\end{equation}
Combined with \eqref{eq2}, we have the following network
exchange
\begin{equation}\label{zeta_l-0}
\bar{\zeta}_i = \sum_{j=1}^{N}a_{ij}(y_i-y_j)+\iota_{i}(y_i-y_r).
\end{equation}
$\bar{\zeta}_i$, as defined in above, can be rewritten in terms of the coefficients of a so-called expanded Laplacian matrix $\bar{L}=L+diag\{\iota_i\}=[\bar{\ell}_{ij}]_{N \times N}$ as
\begin{equation}\label{zeta_l-n}
\bar{\zeta}_i=\sum_{j=1}^{N}\bar{\ell}_{ij}(y_j-y_r).
\end{equation}
Note that $\bar{L}$ is not a regular Laplacian matrix associated to the graph, since the sum of its rows need not be zero. We know that all the eigenvalues of $\bar{L}$, have positive real parts. In particular matrix $\bar{L}$ is invertible.

To guarantee that each agent gets the information from the exosystem, we need to make sure that there exists a path from node set $\mathscr{C}$ to each node.  Therefore, we define the following set of graphs.
\begin{definition}\label{def_rootset}
	Given a node set $\mathscr{C}$, we denote by $\mathbb{G}_{\mathscr{C}}^N$ the set of all graphs with $N$ nodes containing the node set $\mathscr{C}$, such that every node of the network graph $\mathcal{G}\in\mathbb{G}_\mathscr{C}^N$ is a member of a directed tree
	which has its root contained in the node set $\mathscr{C}$. We will refer to the node set $\mathscr{C}$ as root set.
\end{definition}

\begin{remark}
	Note that Definition \ref{def_rootset} does not require necessarily the existence of directed spanning tree. If the root of the trees
	belongs to the set $\mathscr{C}$, this means all the agents of the
	network will have access to the information of the exosystem,
	i.e. we do not need necessarily the existence of the spanning tree.
\end{remark}

Next, we formulate \textbf{scalable} global regulated state synchronization problem with linear protocols.

\begin{problem}\label{prob2}
	Consider a MAS described by \eqref{eq1} and \eqref{zeta_l-n} and the
associated exosystem \eqref{solu-cond}. Let a set of nodes
$\mathscr{C}$ be given which defines the set
$\mathbb{G}_{\mathscr{C}}^N$.
	
	The \textbf{scalable global regulated state synchronization problem based on localized information exchange} of a MAS is to find, if possible, a linear observer-based dynamic protocol for each agent $i\in\{1,\hdots,N\}$, using only
	knowledge of agent model, i.e. $(A,B,C)$, of the form:
	\begin{equation}\label{protoco3}
	\begin{system}{cl}
	\dot{x}_{c,i}&=A_{c} x_{c,i}+B_{c}{\sigma(u_i)}+C_{c} \bar{\zeta}_i+D_{c} \hat{\zeta}_i,\\
	u_i&=F_cx_{c,i}
	\end{system}
	\end{equation}
	where $\hat{\zeta}_i$ is defined in \eqref{eqa1} with $\xi_i=H_{c}x_{i,c}$, and $x_{c,i}\in\R^{n_c}$, such that regulated state synchronization
	\begin{equation}\label{synch_org}
	\lim_{t\to \infty} (x_i-x_j)=0 \quad \text{ for all } i,j \in {1,...,N}
	\end{equation}
	 is achieved for any $N$ and any graph $\mathcal{G}\in \mathbb{G}_{\mathscr{C}}^N$, and for all initial conditions of the agents $x_i(0) \in \mathbb{R}^n$, all initial conditions of the exosystem $x_r(0) \in \mathbb{R}^n$, and all initial conditions of the protocols $x_{c,i}(0) \in \mathbb{R}^{n_c}$.
\end{problem}

\begin{remark}
	In the case of full-state coupling, matrix $C=I$ and we refer to Problem \ref{prob2} as \textbf{scalable global regulated state synchronization problem based on localized information exchange for MAS with full-state coupling}.
\end{remark}

\section{MAS with Neutrally Stable Agents}

In this section, we will consider the scalable global regulated state synchronization problem for a
MAS consisting of neutrally stable agents with input saturation for both networks with full- and partial-state coupling. We make the following assumption on agent models.
\begin{assumption}\label{Aass1}
	We assume that $(A, B, C)$ is controllable and observable.
	Moreover, $A$ is neutrally stable, i.e., all the eigenvalues of $A$ are in the closed left half plane and those eigenvalues on the imaginary axis, if any, are semi-simple.
\end{assumption}

\subsection{Full-state coupling}

In this subsection we consider MAS with full-state coupling.

\vspace{0.5cm}

\begin{Protocol}[h]	\caption{{Full-state coupling} \label{p1fsc}}	
The following protocol is designed for each agent
$i\in\{1,\ldots,N\}$,
\begin{equation}\label{pscpm0}
\begin{system}{cll}
\dot{\chi}_i &=& A\chi_i+B\sigma(u_i)+\bar{\zeta}_i-\hat{\zeta}_i-\iota_i\chi_i \\
u_i &=& -\rho B\T P\chi_i,
\end{system}
\end{equation}
where $\rho>0$ is a parameter with arbitrary positive value and $P>0$ satisfies
\begin{equation}\label{condforneut}
PA+A\T P\leq 0
\end{equation}
since $A$ satisfies Assumption \ref{Aass1}.
The agents communicate $\xi_i$ which is chosen as $\xi_i=\chi_i$, therefore each agent has access to the following information:
\begin{equation}\label{info1}
\hat{\zeta}_i=\sum_{j=1}^Na_{ij}(\chi_i-\chi_j).
\end{equation}
while $\bar{\zeta}_i$ is defined by \eqref{zeta_l-n}.
\end{Protocol}

\vspace{0.5cm}

We have following theorem for scalable global regulated state synchronization based on localized information exchange for MAS with full-state coupling and neutrally stable agent models.

\begin{theorem}\label{mainthm0}
	Consider a MAS with neutrally stable agents described by \eqref{eq1} where $C=I$, satisfying Assumption \ref{Aass1}, and the associated exosystem
	\eqref{solu-cond}. Let a set of nodes $\mathscr{C}$ be given which
	defines the set $\mathbb{G}_{\mathscr{C}}^N$. Let the associated
	network communication be given by \eqref{zeta_l-n}. 
	
	Then, the scalable global regulated state synchronization problem based on localized information exchange for MAS with full-state coupling as stated in Problem
	\ref{prob2} is solvable. In particular, for any given $\rho>0$, the
	dynamic protocol \eqref{pscpm0} solves the regulated state
	synchronization problem for any $N$ and any graph
	$\mathcal{G}\in\mathbb{G}_{\mathscr{C}}^N$. 
\end{theorem}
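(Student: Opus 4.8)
The plan is to reduce the network dynamics to a closed-loop system in error coordinates relative to the exosystem, and then show that system is globally asymptotically stable using a suitable Lyapunov argument exploiting the neutral stability of $A$ and the positivity of the expanded Laplacian $\bar L$. First I would introduce the error state $e_i = x_i - x_r$ and the observer error $\tilde{e}_i = \chi_i - e_i$. Since $\dot x_r = Ax_r$ and $\dot x_i = Ax_i + B\sigma(u_i)$, we get $\dot e_i = Ae_i + B\sigma(u_i)$. The key algebraic step is to compute $\dot{\tilde{e}}_i$ using the protocol \eqref{pscpm0}: the term $\bar\zeta_i - \hat\zeta_i$ in $\dot\chi_i$, written in stacked form with $\bar L$ and $L$, becomes $(\bar L \otimes I)(e) - (L \otimes I)(\chi)$; here I would use $C=I$ so that $\bar\zeta_i = \sum_j \bar\ell_{ij}(x_j - x_r) = \sum_j \bar\ell_{ij} e_j$ and $\hat\zeta_i = \sum_j a_{ij}(\chi_i - \chi_j)$ has Laplacian $L$. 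Writing $\chi = e + \tilde e$, the difference $(\bar L \otimes I)e - (L\otimes I)\chi = (\bar L - L)\otimes I\, e - (L\otimes I)\tilde e = \diag(\iota_i)\otimes I\, e - (L \otimes I)\tilde e$. Combining with the $-\iota_i \chi_i = -\iota_i e_i - \iota_i \tilde e_i$ term and the $Ae_i + B\sigma(u_i)$ cancellation, the stacked observer error should satisfy $\dot{\tilde e} = (I_N \otimes A)\tilde e - (\bar L \otimes I)\tilde e = (I_N\otimes A - \bar L \otimes I)\tilde e$, a linear, saturation-free system.

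Next I would argue that $\tilde e \to 0$ globally exponentially. Since all eigenvalues of $\bar L$ have strictly positive real part (stated in the excerpt) and $A$ is neutrally stable, the matrix $I_N\otimes A - \bar L \otimes I$ is Hurwitz: in a basis that upper-triangularizes $\bar L$ with eigenvalues $\lambda_k$, the diagonal blocks are $A - \lambda_k I$, each of which is Hurwitz because shifting a neutrally stable $A$ by $-\lambda_k I$ with $\re(\lambda_k)>0$ moves every eigenvalue strictly into the open left half plane. Hence $\tilde e(t) \to 0$ exponentially for all initial conditions. The control becomes $u_i = -\rho B\T P \chi_i = -\rho B\T P(e_i + \tilde e_i)$, so the $e$-subsystem is a vanishing-perturbation of $\dot e_i = Ae_i - \rho B\sigma(B\T P e_i) + (\text{term from }\tilde e_i\text{ inside the saturation})$.

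For the $e$-subsystem I would use the Lyapunov function $V(e) = \sum_{i=1}^N e_i\T P e_i$ (equivalently $e\T(I_N\otimes P)e$). Along $\dot e_i = Ae_i + B\sigma(u_i)$, using \eqref{condforneut} we get $\dot V = \sum_i e_i\T(PA + A\T P)e_i + 2\sum_i e_i\T P B \sigma(u_i) \le 2\sum_i (B\T P e_i)\T \sigma(u_i)$. With $u_i = -\rho B\T P \chi_i = -\rho B\T P e_i - \rho B\T P \tilde e_i$, once $\tilde e$ has decayed the dominant term is $-2\rho \sum_i (B\T P e_i)\T \sigma(\rho B\T P e_i)$, which is negative whenever $B\T P e_i \neq 0$ because $w\T\sigma(\rho w) \ge 0$ with equality only at $w=0$. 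The $\tilde e$-dependent cross term is integrable in time (since $\tilde e$ decays exponentially and $\sigma$ is globally bounded, so the perturbation is bounded and $L^1$ in $t$), so a standard ISS / cascade argument (or a direct Barbalat-type argument) shows $V$ stays bounded and $B\T P e_i \to 0$. Then I would invoke a LaSalle-type invariance argument on the limiting autonomous system $\dot e_i = Ae_i$ restricted to the set $B\T P e_i \equiv 0$: controllability of $(A,B)$ together with $A\T P = -PA$ forces this invariant set to be trivial, yielding $e_i \to 0$, hence $x_i \to x_r$ for every $i$ and in particular \eqref{synch_org}. Since nothing in the argument used $N$ or the specific graph beyond $\mathcal G \in \mathbb G_{\mathscr C}^N$ (which guarantees $\bar L$ has eigenvalues with positive real part) and $\rho>0$ was arbitrary, the protocol is scale-free.

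The main obstacle I anticipate is the cascade step: handling the saturation nonlinearity driven by the non-vanishing-at-fixed-$t$ but asymptotically-vanishing observer error $\tilde e_i$ inside $\sigma(\cdot)$, and making the Barbalat/LaSalle conclusion rigorous for a time-varying (non-autonomous) closed loop. The boundedness of $\sigma$ is what saves the argument — it prevents finite-escape and makes the perturbation term $L^1$ — but writing the invariance argument cleanly for the perturbed system, rather than hand-waving "the perturbation vanishes so we reduce to the nominal system," is the delicate part. A secondary technical point is verifying the block-triangularization of $I_N\otimes A - \bar L\otimes I$ and the Hurwitz claim when $\bar L$ is not diagonalizable, which is handled by a Schur form and the semi-simplicity of the imaginary-axis eigenvalues of $A$.
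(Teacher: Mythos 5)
Your proposal is correct in substance and reaches the same structural reduction as the paper: the same error coordinates (your $e_i=x_i-x_r$ and $\tilde e_i=\chi_i-e_i$ are, up to sign, the paper's $\tilde x_i$ and $e_i$), the same observation that the observer error obeys the linear saturation-free dynamics $\dot{\tilde e}=(I\otimes A-\bar L\otimes I)\tilde e$, and the same Schur/Jordan argument that this matrix is Hurwitz because $A$ is neutrally stable and every eigenvalue of $\bar L$ has positive real part. Where you genuinely diverge is the stability argument for the coupled system. You treat it as a cascade: prove $\tilde e\to 0$ exponentially first, then analyze the $e$-subsystem with $V=e\T(I\otimes P)e$ and handle the $\tilde e$-dependence inside the saturation as a vanishing, $L^1$-in-time perturbation, closing the argument with an ISS/Barbalat-type reasoning for a non-autonomous system \emph{plus} a separate invariance argument on $\{B\T Pe\equiv 0\}$. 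The paper instead uses a single composite Lyapunov function $V=\tilde x\T(I\otimes P)\tilde x+e\T\bar P e$, where $\bar P$ is chosen to satisfy $\bar P(I\otimes A-\bar L\otimes I)+(\cdot)\T\bar P\leq -(1+\rho\|B\T P\|^2)I$; this extra margin lets the cross term $2e\T(I\otimes PB)\sigma(u)$ be absorbed by completing the square, yielding $\dot V\leq -2\rho^{-1}u\T\sigma(u)+\rho^{-1}\sigma\T(u)\sigma(u)-\|e\|^2\leq 0$ for the \emph{autonomous} closed loop, so the ordinary LaSalle invariance principle applies directly. In other words, the delicate step you yourself flag — making the Barbalat/invariance argument rigorous for a time-varying perturbed system — is exactly what the paper's weighted composite Lyapunov function is engineered to avoid. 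Your route is workable (boundedness of $\sigma$ does make the perturbation integrable, and controllability of $(A,B)$ with $PA+A\T P\leq 0$ does kill the residual invariant set, which the paper establishes via the auxiliary Hurwitz matrix $A+BS$ and a growth contradiction), but it requires more machinery to write cleanly; the paper's choice of $\bar P$ is the one idea your plan is missing that turns the whole proof into a standard autonomous LaSalle argument.
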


\begin{proof}[Proof of Theorem \ref{mainthm0}]
	Firstly, by defining $\tilde{x}_i=x_i-x_r$ and $e_i=\tilde{x}_i-\chi_i$ we have
	\begin{equation*}
	\begin{system*}{l}
	\dot{\tilde{x}}_i=A\tilde{x}_i+ B\sigma (u_i),\\
	\dot{e}_i=Ae_i-\sum_{j=1}^{N}\bar{\ell}_{ij}e_j,\\
	u_i=-\rho B\T P(\tilde{x}_i-e_i)
	\end{system*}	
	\end{equation*}
	 Then, let
	\[
	\tilde{x}=\begin{pmatrix}
	\tilde{x}_1\\\vdots\\\tilde{x}_N
	\end{pmatrix} ,
	u=\begin{pmatrix}
	u_1\\\vdots\\u_N
	\end{pmatrix}, 
	e=\begin{pmatrix}
	e_1\\\vdots\\e_N
	\end{pmatrix}, \text{ and } 
	\sigma(u)=\begin{pmatrix}
	\sigma(u_1)\\\vdots\\\sigma(u_N)
	\end{pmatrix}
	\]
	then we have the following closed-loop system
	\begin{equation}\label{closedls1}
	\begin{system}{l}
	\dot{\tilde{x}}=(I\otimes A)\tilde{x}+ (I\otimes B)\sigma (u),\\
	\dot{e}=(I\otimes A-\bar{L}\otimes I)e,\\
	u=-\rho (I\otimes  B\T P)(\tilde{x}-e).
	\end{system}	
	\end{equation}
	
	Since all eigenvalues of $\bar{L}$ have positive real part, we have
	\begin{equation}\label{boundapl}
	(T\otimes I)(I\otimes A-\bar{L}\otimes I)(T^{-1}\otimes I)=I\otimes A-\bar{J}\otimes I
	\end{equation}
	for a non-singular transformation matrix $T$, where
	\eqref{boundapl}  is upper triangular Jordan form with $A-\lambda_i I$ for $i=1,\cdots,N$ on the diagonal. Since the agents are neutrally stable, i.e. all eigenvalues of $A$ are in the closed left half plane, $A-\lambda_i I$ is stable. Therefore, all eigenvalues of $I\otimes A-\bar{L}\otimes I$ have negative real part.
	
	Then, we choose the following Lyapunov function
	\begin{equation}
	V=\tilde{x}\T (I\otimes P)\tilde{x}+e\T \bar{P}e
	\end{equation}
	where $P>0$ satisfies condition \eqref{condforneut} and $\bar{P}>0$ satisfies
	\begin{equation}\label{condfe}
	\bar{P}(I\otimes A-\bar{L}\otimes I)+(I\otimes A-\bar{L}\otimes I)\T\bar{P}\leq -(1+\rho \|B\T P\|^2) I
	\end{equation}
	Thus, we have
	\begin{align*}
	\nonumber\frac{dV}{dt}=&\tilde{x}\T I\otimes (PA+A\T P)\tilde{x}+ 2\tilde{x}\T(I\otimes PB)\sigma (u)\\
	\nonumber&+e\T[\bar{P}(I\otimes A-\bar{L}\otimes I)+(I\otimes A-\bar{L}\otimes I)\T\bar{P}]e\\
	\leq&-2\rho^{-1}u\T\sigma (u)+2e\T (I\otimes PB)\sigma (u)\\
	&- (1+\rho \|B\T P\|^2)e\T e\\
	\leq&-2\rho^{-1}u\T\sigma (u)+\rho^{-1}\sigma\T(u)\sigma (u)-\|e\|^2
	\end{align*}
	
	Since $u_i^k\sigma (u_i^k)=|u_i^k||\sigma (u_i^k)|\geq|\sigma (u_i^k)|^2$ ($u_i^k$ is $k$th element of $u_i$, $k=1,\cdots,n$), we have $
	-2u\T\sigma (u)+\sigma\T(u)\sigma (u)\leq0$.	
	Thus, we obtain
	$
	\frac{dV}{dt}\leq 0$.
	
	Meanwhile, we note that $\frac{dV}{dt}=0$ when $I\otimes (PA+A\T P)\tilde{x}=0$, $(I\otimes B\T P)\tilde{x}=0$, and $e=0$ based on \eqref{condfe}.
	Thus in this case, $\tilde{x}$ is the solution of the dynamics $\dot{\tilde{x}}=(I\otimes A)\tilde{x}$.

	Let $S$ be a matrix such that $A+BS$ is Hurwitz stable. Then we have
	\[
	(I\otimes P)\dot{\tilde{x}}=I\otimes (PA-S\T B\T P)\tilde{x}=-I\otimes (A\T+S\T B\T) (I\otimes P)\tilde{x}
	\]
	since $[I\otimes (PA)]\tilde{x}=-[I\otimes (A\T P)]\tilde{x}$ and $[I\otimes (S\T B\T P)]\tilde{x}=0$. Because $A\T+S\T B\T $ is Hurwitz stable, we have $(I\otimes P)\tilde{x}$ is exponentially growing which contradicts with $\dot{\tilde{x}}=(I\otimes A)\tilde{x}$. It means that $\tilde{x}=0$ is the solution of the above dynamics when $P>0$. Thus, the invariance set $\{(\tilde{x},e): \dot{V}(\tilde{x},e)=0\}$ contains no trajectory of the system except the trivial trajectory $(\tilde{x},e)=(0,0)$.	
	Therefore, system \eqref{closedls1} is globally asymptotically stable based on LaSalle's invariance principle. 
	It means we have $\tilde{x}\to 0$ and $e\to0$ when $t\to \infty$. Thus we obtain $x_i\to x_r$ as $t\to \infty$, which proves our result.
\end{proof}

\subsection{Partial-state coupling}
In this subsection we consider MAS with partial-state coupling.

\vspace{0.5cm}

\begin{Protocol}[h]	\caption{{partial-state coupling} \label{p2psc}}
The following protocol is designed for each agent
$i\in\{1,\ldots,N\}$,
\begin{equation}\label{pscpm02}
\begin{system}{cll}
\dot{\hat{x}}_i &=& A\hat{x}_i+B\hat{\zeta}_{i2}+F(\bar{\zeta}_i-C\hat{x}_i)+\iota_iB\sigma(u_i) \\
\dot{\chi}_i &=& A\chi_i+B\sigma(u_i)+\hat{x}_i-\hat{\zeta}_{i1}-\iota_{i}\chi_i \\
u_i &=& -\rho B\T P\chi_i,
\end{system}
\end{equation}
where $F$ is a design matrix such that $A-FC$ is Hurwitz stable, $\rho>0$ is a parameter with arbitrary positive value, and $P$ satisfies \eqref{condforneut}.
In this protocol, the agents communicate $\xi_i=\begin{pmatrix}
\xi_{i1}\T,&\xi_{i2}\T
\end{pmatrix}\T=\begin{pmatrix}
\chi_i\T,&\sigma\T(u_i)
\end{pmatrix}\T$, i.e. each agent has access to localized information $\hat{\zeta}_i=\begin{pmatrix}
\hat{\zeta}_{i1}\T,&\hat{\zeta}_{i2}\T
\end{pmatrix}\T$, where $\hat{\zeta}_{i1}$ and $\hat{\zeta}_{i2}$ are defined as
\begin{equation}\label{add_1}
\hat{\zeta}_{i1}=\sum_{j=1}^Na_{ij}(\chi_i-\chi_j),
\end{equation}
and
\begin{equation}\label{add_3}
\hat{\zeta}_{i2}=\sum_{j=1}^{N}a_{ij}(\sigma(u_i)-\sigma(u_j)),
\end{equation}
while $\bar{\zeta}_i$ is defined via \eqref{zeta_l-n}.
\end{Protocol}

\vspace{0.5cm}

Then, we have the following theorem for scalable global regulated state synchronization based on localized information exchange for MAS with partial-state coupling and neutrally stable agent models.

\begin{theorem}\label{mainthm02}
	Consider a MAS with neutrally stable agents described by \eqref{eq1} satisfying Assumption \ref{Aass1}, and the associated exosystem
	\eqref{solu-cond}. Let a set of nodes $\mathscr{C}$ be given which
	defines the set $\mathbb{G}_{\mathscr{C}}^N$. Let the associated
	network communication be given by \eqref{zeta_l-n}. 
	
	Then, the scalable global regulated state synchronization problem based on localized information exchange for MAS with partial-state coupling as stated in Problem
	\ref{prob2} is solvable. In particular, for any given $\rho>0$, the
	dynamic protocol \eqref{pscpm02} solves the scalable regulated state
	synchronization problem for any $N$ and any graph
	$\mathcal{G}\in\mathbb{G}_{\mathscr{C}}^N$. 
\end{theorem}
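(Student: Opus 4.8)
The plan is to mirror the proof of Theorem~\ref{mainthm0}, adding one auxiliary observer-error variable to absorb the partial-state coupling. First I would set $\tilde{x}_i = x_i - x_r$ and introduce the Laplacian-weighted state $\bar{x}_i = \sum_{j=1}^{N}\bar{\ell}_{ij}\tilde{x}_j$, so that $\bar{\zeta}_i = C\bar{x}_i$. Using the zero row sums of $L$ and $\bar{L} = L + \diag\{\iota_i\}$ one gets $\sum_{j}\bar{\ell}_{ij}\sigma(u_j) = \hat{\zeta}_{i2} + \iota_i\sigma(u_i)$, hence $\dot{\bar{x}}_i = A\bar{x}_i + B\hat{\zeta}_{i2} + \iota_i B\sigma(u_i)$, which is exactly the model part of the $\hat{x}_i$-equation. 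Therefore the observer error $\varepsilon_i := \bar{x}_i - \hat{x}_i$ has innovation $\bar{\zeta}_i - C\hat{x}_i = C\varepsilon_i$ and satisfies the decoupled, exponentially stable dynamics $\dot{\varepsilon}_i = (A-FC)\varepsilon_i$. Next, with $e_i := \tilde{x}_i - \chi_i$, using $\hat{\zeta}_{i1} + \iota_i\chi_i = \sum_{j}\bar{\ell}_{ij}\chi_j$ and $\hat{x}_i = \bar{x}_i - \varepsilon_i$, the $\chi$-equation gives $\dot{e}_i = Ae_i - \sum_{j}\bar{\ell}_{ij}e_j + \varepsilon_i$, while $u_i = -\rho B\T P(\tilde{x}_i - e_i)$.

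Stacking the agents, the closed loop is $\dot{\tilde{x}} = (I\otimes A)\tilde{x} + (I\otimes B)\sigma(u)$, $\dot{\eta} = M\eta$ where $\eta = (e\T,\varepsilon\T)\T$ and $M$ is block upper-triangular with diagonal blocks $I\otimes A - \bar{L}\otimes I$ and $I\otimes(A-FC)$, and $u = -\rho(I\otimes B\T P)(\tilde{x}-e)$. By the Jordan-form argument already used in the proof of Theorem~\ref{mainthm0}, $I\otimes A - \bar{L}\otimes I$ is Hurwitz, and since $A-FC$ is Hurwitz, $M$ is Hurwitz. I would then take $V = \tilde{x}\T(I\otimes P)\tilde{x} + \eta\T Q\eta$ with $P>0$ solving \eqref{condforneut} and $Q>0$ chosen so that $QM + M\T Q \leq -(1+\rho\|B\T P\|^2)I$. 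Differentiating, the term $2\tilde{x}\T(I\otimes PB)\sigma(u)$ is rewritten via $(I\otimes B\T P)\tilde{x} = -\rho^{-1}u + (I\otimes B\T P)e$; combining $u_i^k\sigma(u_i^k)\geq|\sigma(u_i^k)|^2$ with a Young inequality on the cross term $(I\otimes B\T P)e$ (bounded by $\|B\T P\|\,\|\eta\|$), the leftover $\rho\|B\T P\|^2\|\eta\|^2$ is absorbed by the choice of $Q$, giving $\dot{V} \leq -\rho^{-1}\|\sigma(u)\|^2 - \|\eta\|^2 \leq 0$.

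Finally I would apply LaSalle's invariance principle. On $\{\dot{V}=0\}$ one has $\eta = 0$ (so $e=0$, $\varepsilon=0$) and $\sigma(u)=0$, hence $(I\otimes B\T P)\tilde{x}=0$, and, since the trajectory stays in the set, also $(I\otimes(PA+A\T P))\tilde{x}=0$ with $\tilde{x}$ evolving as $\dot{\tilde{x}}=(I\otimes A)\tilde{x}$. The same contradiction argument as in Theorem~\ref{mainthm0} — picking $S$ with $A+BS$ Hurwitz, showing $\tfrac{d}{dt}(I\otimes P)\tilde{x} = -(I\otimes(A+BS)\T)(I\otimes P)\tilde{x}$, which is exponentially growing unless it vanishes — forces $\tilde{x}=0$. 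Thus the invariant set reduces to $(\tilde{x},\eta)=(0,0)$, the closed loop is globally asymptotically stable, so $\tilde{x}\to 0$, giving $x_i\to x_r$ for all $i$ and in particular $x_i-x_j\to 0$.

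The main obstacle I anticipate is the first step: recognizing that $\hat{x}_i$ must track the weighted state $\bar{x}_i = \sum_j\bar{\ell}_{ij}\tilde{x}_j$ rather than $\tilde{x}_i$, and then checking, through the bookkeeping with $L$, $\bar{L}$, $\iota_i$, $\hat{\zeta}_{i1}$ and $\hat{\zeta}_{i2}$, that the innovation $\bar{\zeta}_i - C\hat{x}_i$ collapses exactly to $C\varepsilon_i$ so that the error system fully decouples to $\dot{\varepsilon}_i=(A-FC)\varepsilon_i$ and the overall $\eta$-system is triangular (hence Hurwitz). Once that cascade is established, the Lyapunov/LaSalle part is a routine adaptation of the full-state proof, the only new wrinkle being the domination of the extra $\varepsilon$-to-$e$ coupling, handled by the weight $Q$.
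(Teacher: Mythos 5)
Your proposal is correct and follows essentially the same route as the paper: your $\varepsilon_i=\bar{x}_i-\hat{x}_i$ is exactly the paper's $\bar{e}=(\bar{L}\otimes I)\tilde{x}-\hat{x}$, your block-triangular matrix $M$ is the paper's $\bar{A}$, and your weight $Q$ plays the role of the paper's $\tilde{P}$ satisfying the same Lyapunov inequality with right-hand side $-(1+\rho\|B\T P\|^2)I$, after which the LaSalle argument is carried over from Theorem \ref{mainthm0} verbatim. The only difference is that you spell out the bookkeeping with $\hat{\zeta}_{i1}$, $\hat{\zeta}_{i2}$, $L$ and $\bar{L}$ that the paper leaves implicit under ``similar to the proof of Theorem \ref{mainthm0}''.
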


\begin{proof}[Proof of Theorem \ref{mainthm02}]
	Similar to the proof of Theorem \ref{mainthm0}, we have 
	the matrix expression of closed-loop system 
	\begin{equation}\label{newsystem}
	\begin{system}{l}
	\dot{\tilde{x}}=(I\otimes A)\tilde{x}+ (I\otimes B)\sigma (u)\\
	\dot{e}=(I\otimes A-\bar{L}\otimes I)e+\bar{e}\\
	\dot{\bar{e}}=I\otimes (A-FC)\bar{e}\\	
	u=-\rho (I\otimes B\T P) (\tilde{x}-e)
	\end{system}
	\end{equation}
	by $e=\tilde{x}-\chi$, and $\bar{e}=(\bar{L}\otimes I)\tilde{x}-\hat{x}$.
	
	Then, choose the following Lyapunov function
	\[
	V=\tilde{x}\T (I\otimes P)\tilde{x}+\begin{pmatrix}
	e\\\bar{e}
	\end{pmatrix}\T\tilde{P}\begin{pmatrix}
	e\\\bar{e}
	\end{pmatrix}
	\]
	where $P>0$ satisfies \eqref{condforneut} and $\tilde{P}>0$ satisfies 
	\begin{equation}
	\tilde{P}\bar{A}+\bar{A}\T\tilde{P}\leq -(\rho \|B\T P\|^2+1)I
	\end{equation}
	with
	\[
	\bar{A}=\begin{pmatrix}
	I\otimes A-\bar{L}\otimes I&I\\0&I\otimes (A-FC)
	\end{pmatrix}.
	\]
	
	Similar to Theorem \ref{mainthm0}, we can obtain the synchronization result $x_i\to x_r$ as $t\to \infty$.	
\end{proof}	

\vspace{1cm}

\section{MAS with Double-integrator Agents}

In this section, we will consider scalable global regulated state synchronization problem for MAS consisting of double-integrator agents with input saturation for both networks with full and partial-state coupling.

\subsection{Full-state coupling}
In this subsection, we design dynamic protocols for MAS with full-state coupling and double-integrator agent models. First, for agents \eqref{eq1} with double integrator models, we have
\begin{equation}\label{doubleAB}
A=\begin{pmatrix}
0&I_m\\0&0
\end{pmatrix}, B=\begin{pmatrix}
0\\I_m
\end{pmatrix}
\end{equation}
where $A \in \mathbb{R}^{2m\times 2m}$ and $B \in \mathbb{R}^{2m\times m}$.  Then, we choose matrix $K=\begin{pmatrix}
K_1&K_2
\end{pmatrix}$ such that $K_i \in \mathbb{R}^{m\times m}$, $i=1,2$ are arbitrary negative definite matrices.
	

\vspace{0.5cm}

\begin{Protocol}[h]	\caption{{full-state coupling} \label{p3fsc}}
The following protocol is designed for each agent
$i\in\{1,\ldots,N\}$,
\begin{equation}\label{pscpd1}
\begin{system}{cll}
\dot{\chi}_i &=& A\chi_i+B\sigma(u_i)+\bar{\zeta}_i-\hat{\zeta}_i-\iota_i\chi_i \\
u_i &=&\rho K\chi_i,
\end{system}
\end{equation}
where $\rho>0$ is a parameter with arbitrary positive value, and $\hat{\zeta}_i$ and $\bar{\zeta}_i$ are defined by \eqref{info1} and \eqref{zeta_l-n}, respectively.
\end{Protocol}

\vspace{0.5cm}

 We have the following theorem for scalable global regulated state synchronization problem based on localized information exchange for MAS with full-state coupling and double-integrator agent models.
 
\begin{theorem}\label{mainthm3}
	Consider a MAS described by \eqref{eq1} with \eqref{doubleAB} and $C=I$, and the associated exosystem
	\eqref{solu-cond}. Let a set of nodes $\mathscr{C}$ be given which
	defines the set $\mathbb{G}_{\mathscr{C}}^N$. Let the associated
	network communication be given by \eqref{zeta_l-n}. 
	
	Then, the scalable global regulated state synchronization problem based on localized information exchange for MAS with full-state coupling as stated in Problem
	\ref{prob2} is solvable. In particular, for any given $\rho>0$, the
	dynamic protocol \eqref{pscpd1} solves the regulated state
	synchronization problem for any $N$ and any graph
	$\mathcal{G}\in\mathbb{G}_{\mathscr{C}}^N$. 
\end{theorem}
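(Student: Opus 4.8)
The plan is to follow the cascade argument used in the proof of Theorem~\ref{mainthm0}, adapted to the fact that the matrix $A$ in \eqref{doubleAB} is no longer neutrally stable. Set $\tilde{x}_i=x_i-x_r$ and $e_i=\tilde{x}_i-\chi_i$. Using $C=I$ so that $\bar{\zeta}_i=\sum_{j}\bar{\ell}_{ij}\tilde{x}_j$, together with $\hat{\zeta}_i=\sum_{j}a_{ij}(\chi_i-\chi_j)$ and $\bar{L}=L+\diag\{\iota_i\}$, a computation identical in form to the one in Theorem~\ref{mainthm0} yields the closed-loop system $\dot{\tilde{x}}=(I\otimes A)\tilde{x}+(I\otimes B)\sigma(u)$, $\dot{e}=(I\otimes A-\bar{L}\otimes I)e$, and $u=\rho(I\otimes K)(\tilde{x}-e)$. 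Since $x_i-x_r=\tilde{x}_i$, regulated state synchronization will follow from $\tilde{x}\to0$. It is convenient to replace $\tilde{x}$ by the stacked vector $\chi=\tilde{x}-e$, which obeys $\dot{\chi}=(I\otimes A)\chi+(I\otimes B)\sigma(\rho(I\otimes K)\chi)+(\bar{L}\otimes I)e$; this places the interconnection term \emph{outside} the saturation and decouples the nominal dynamics agent by agent, so that $\tilde{x}\to0$ amounts to $\chi\to0$ together with $e\to0$.

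I would first dispose of the $e$-subsystem exactly as in Theorem~\ref{mainthm0}: every eigenvalue $\lambda_i$ of $\bar{L}$ has positive real part, hence $I\otimes A-\bar{L}\otimes I$ is similar to a block upper-triangular matrix with diagonal blocks $A-\lambda_i I$, and for the double integrator $A-\lambda_i I$ has $-\lambda_i$ as its only eigenvalue, which lies in the open left half plane. Thus $I\otimes A-\bar{L}\otimes I$ is Hurwitz, $e(t)\to0$ exponentially, and the perturbation $\delta(t):=(\bar{L}\otimes I)e(t)$ entering the $\chi$-equation decays exponentially.

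Next I would analyze, for each agent, the system $\dot{\chi}_i=A\chi_i+B\sigma(\rho K\chi_i)+\delta_i(t)$; writing $\chi_i=(\eta_i\T,\nu_i\T)\T$ this reads $\dot{\eta}_i=\nu_i+\delta_i^{(1)}$, $\dot{\nu}_i=\sigma(\rho K_1\eta_i+\rho K_2\nu_i)+\delta_i^{(2)}$. For the nominal system ($\delta_i\equiv0$) I would use a radially unbounded Lyapunov function $V_i=\tfrac12\|\nu_i\|^2+\Phi(\eta_i)$ in which $\Phi$ is a saturated potential whose gradient is aligned with $-\sigma(\rho K_1\eta_i)$; using that $\sigma$ is monotone and $K_2$ is negative definite, $\dot{V}_i=\nu_i\T[\sigma(\rho K_1\eta_i+\rho K_2\nu_i)-\sigma(\rho K_1\eta_i)]\le0$, and LaSalle's principle gives global asymptotic stability of $(\eta_i,\nu_i)=0$. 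To pass to the perturbed system, note that $\|\dot{\nu}_i\|$ is bounded by a constant plus the exponentially small $\|\delta_i^{(2)}\|$, so $\nu_i$ and $\eta_i$ can grow at most linearly, respectively quadratically, in $t$; hence along trajectories $\dot{V}_i\le(\text{nonpositive term})+\nu_i\T\delta_i^{(2)}-\sigma(\rho K_1\eta_i)\T\delta_i^{(1)}$ with the last two terms integrable in $t$, so that $V_i$ stays bounded and therefore $\chi_i$ is bounded. Finally, a bounded trajectory of a globally asymptotically stable system perturbed by a vanishing disturbance has an $\omega$-limit set that is nonempty, compact and invariant for the nominal flow, hence equal to $\{0\}$; thus $\chi_i(t)\to0$. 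Together with $e(t)\to0$ this gives $\tilde{x}(t)\to0$, i.e. $x_i(t)\to x_r(t)$ for all $i$, which proves the theorem.

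The main obstacle is this last, genuinely \emph{global} step: a semi-global statement would follow from standard low-gain arguments, but here one must cope with arbitrarily large initial conditions while the disturbance $\delta_i$ is not small relative to $\chi_i$ when $\chi_i$ is large. What makes the argument go through is that $\sigma$ is globally bounded and the double-integrator drift grows only polynomially, so there is no finite escape time and the disturbance contribution to $\dot{V}_i$ is time-integrable. This is exactly where the restriction to double-integrator dynamics (chains of length two) is essential: for chains of three or more integrators saturated linear feedback is not globally stabilizing, which is why only neutrally stable, double-integrator and single-integrator building blocks are treated in this paper.
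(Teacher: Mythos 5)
Your cascade architecture (treat the $e$-dynamics as an exponentially decaying input to the saturated double-integrator $\chi$-dynamics, prove global asymptotic stability of the nominal $\chi$-system, then conclude via an asymptotically-autonomous, vanishing-perturbation argument) is a genuinely different route from the paper, which instead builds one composite Lyapunov function for the full interconnection, $V=\rho\,\tilde{x}\T\bigl(I\otimes\diag(0,P_d)\bigr)\tilde{x}+e\T P_D e+2\int_0^u\sigma(s)\,ds$ with $P_d=-K_1$, and applies LaSalle once. The tail of your argument --- polynomial growth bounds under saturation, integrability of the perturbation terms, and the reduction of the $\omega$-limit set to the origin --- is sound. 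The problem is the core step.

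Specifically, global asymptotic stability of the nominal system $\dot\eta_i=\nu_i$, $\dot\nu_i=\sigma(\rho K_1\eta_i+\rho K_2\nu_i)$ must hold for \emph{arbitrary} negative definite $K_1,K_2$, which is what Protocol \ref{p3fsc} permits, and your Lyapunov construction does not deliver this. First, the potential $\Phi$ with $\nabla\Phi(\eta)=-\sigma(\rho K_1\eta)$ does not exist unless $K_1$ is diagonal: the Jacobian of $\eta\mapsto-\sigma(\rho K_1\eta)$ is $-\rho D(\eta)K_1$ with $D(\eta)$ a diagonal matrix of zeros and ones, and requiring symmetry for every saturation pattern forces $K_1$ to be diagonal. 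Second, the inequality $\nu\T[\sigma(\rho K_1\eta+\rho K_2\nu)-\sigma(\rho K_1\eta)]\le 0$ does not follow from monotonicity of $\sigma$ plus negative definiteness of $K_2$: the bracket equals $D\rho K_2\nu$ for some $D=\diag(d_j)$ with $d_j\in[0,1]$, and $\nu\T DK_2\nu\le0$ fails, e.g., for $K_2=\left(\begin{smallmatrix}-1&0.9\\0.9&-1\end{smallmatrix}\right)$, $D=\diag(1,0)$, $\nu=(1,2)\T$, a configuration realized by driving one input channel into saturation while the other stays linear. Third, even in the diagonal case $\dot V_i=0$ only yields $\nu_jw_j=0$ componentwise rather than $\sigma(u_i)=0$, so the LaSalle invariance computation you skip is not immediate (there are nontrivial arcs of $\{\dot V_i=0\}$ inside the saturation region). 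The paper's device repairs all three points simultaneously: take the potential as a function of the \emph{input}, $2\int_0^{u}\sigma(s)\,ds$ with $u=\rho(I\otimes K)(\tilde x-e)$, which is always a well-defined, radially unbounded potential because $\sigma$ acts componentwise on $u$; the cross term is then cancelled exactly by $KA+\begin{pmatrix}0&P_d\end{pmatrix}=0$ with $P_d=-K_1$, the dissipation comes from $KB+B\T K\T=K_2+K_2\T<0$, and $\dot V=0$ forces $\sigma(u)=0$, i.e.\ $(I\otimes K)\tilde x=0$ and $e=0$, after which the invariance analysis is elementary. If you replace $\Phi(\eta_i)$ by $2\int_0^{u_i}\sigma(s)\,ds$ together with the quadratic term $\rho\,\nu_i\T P_d\nu_i$, your cascade proof can be made to go through for arbitrary negative definite $K_1,K_2$.
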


\begin{proof}[Proof of Theorem \ref{mainthm3}]
	Firstly, similar to Theorem \ref{mainthm0}, we have
	\begin{equation*}
	\begin{system*}{l}
	\dot{\tilde{x}}_i=A\tilde{x}_i+ B\sigma (u_i),\\
	\dot{e}_i=Ae_i-\sum_{j=1}^{N}\bar{\ell}_{ij}e_j,\\
	u_i=\rho K(\tilde{x}_i-e_i)
	\end{system*}	
	\end{equation*}
	by $\tilde{x}_i=x_i-x_r$ and $e_i=\tilde{x}_i-\chi_i$. Then, let
	\[
	\tilde{x}=\begin{pmatrix}
	\tilde{x}_1\\\vdots\\\tilde{x}_N
	\end{pmatrix},
	u=\begin{pmatrix}
	u_1\\\vdots\\u_N
	\end{pmatrix},
	e=\begin{pmatrix}
	e_1\\\vdots\\e_N
	\end{pmatrix}, \text{ and } 
	\sigma(u)=\begin{pmatrix}
	\sigma(u_1)\\\vdots\\\sigma(u_N)
	\end{pmatrix}
	\]
	where $
	\tilde{x}_i=\begin{pmatrix}
	\left(\tilde{x}_i^{I}\right)\T&\left(\tilde{x}_i^{II}\right)\T
	\end{pmatrix}\T$,	
	then we have the following closed-loop system
	\begin{equation}\label{newdcs}
	\begin{system}{l}
	\dot{\tilde{x}}=(I\otimes A)\tilde{x}+ (I\otimes B)\sigma (u),\\
	\dot{e}=(I\otimes A-\bar{L}\otimes I)e,\\
	u=\rho (I\otimes K)(\tilde{x}-e).
	\end{system}	
	\end{equation}
		
	Then, consider the following Lyapunov function
	\begin{equation}\label{lyapunvd1}
	V=\rho\tilde{x}\T I\otimes \begin{pmatrix}
	0&0\\0&P_d
	\end{pmatrix}\tilde{x}+e\T P_De+2\int_{0}^{u}\sigma(s)ds
	\end{equation}
	where $P_d=-K_1$ and $P_D>0$ satisfies 
	\begin{equation}\label{condd3}
	P_D(I\otimes {A}-\bar{L}\otimes I)+(I\otimes {A}-\bar{L}\otimes I)\T P_D\leq -\gamma I
	\end{equation}
	with $\gamma=1+\rho\eps^{-1}\|K\|^2\|I\otimes \tilde{A}-\bar{L}\otimes I\|^2$, where $\eps$ is such that $K_2<-\frac{\eps}{2} I$ which follows from the choice of $K_2$ as negative definite matrix. Note that it can be shown that $V$ is positive definite, i.e. $V>0$ except for $(\tilde{x}, e)=0$ when $V=0$. Then, we have
	\begin{align*}
	\frac{dV}{dt}
	=&2\rho\sigma\T(u)I\otimes \left[KA+\begin{pmatrix}
	0&P_d
	\end{pmatrix}\right]\tilde{x}\\
	&+e\T[P_D(I\otimes {A}-\bar{L}\otimes I)+(I\otimes {A}-\bar{L}\otimes I)\T P_D]e\\
	&+\rho\sigma\T(u)I\otimes (KB+B\T K\T)\sigma(u)\\
	&-2\rho\sigma\T(u)(I\otimes K)(I\otimes {A}-\bar{L}\otimes I)e\\
	\leq&2\rho\sigma\T(u)I\otimes \left[KA+\begin{pmatrix}
	0&P_d
	\end{pmatrix}\right]\tilde{x}\\
	&-(\gamma-\rho\eps^{-1}\|K\|^2\|I\otimes \tilde{A}-\bar{L}\otimes I\|^2) \|e\|^2\\
	&+\rho\sigma\T(u)I\otimes (KB+B\T K\T+\eps I)\sigma(u)	
	\end{align*}

Meanwhile, we have
\[
KA+\begin{pmatrix}
0&P_d
\end{pmatrix}=\begin{pmatrix}
0&K_1
\end{pmatrix}+\begin{pmatrix}
0&P_d
\end{pmatrix}=0\\
\]
and $K_2<-\frac{\eps}{2} I$ such that
\[
\frac{dV}{dt}\leq-\|e\|^2+\rho\sigma\T(u)I\otimes (KB+B\T K\T+\eps I)\sigma(u)\leq 0
\]

Meanwhile, we can note that $\frac{dV}{dt}=0$ when $(I\otimes K)\tilde{x}=0$ and $e=0$ since \eqref{condd3}.
Thus in this case, $\tilde{x}$ is the solution of the dynamics $
\dot{\tilde{x}}^{I}=\tilde{x}^{II}$ and $\dot{\tilde{x}}^{II}=0$. And then we have $\tilde{x}^{I}=\tilde{x}^{I}(t_0)+t\tilde{x}^{II}(t_0)$ and 
$\tilde{x}^{II}=\tilde{x}^{II}(t_0)$
with $\tilde{x}^{I}=\begin{pmatrix}
\left(\tilde{x}_1^{I}\right)\T&\cdots&\left(\tilde{x}_N^{I}\right)\T
\end{pmatrix}\T$ and $\tilde{x}^{II}=\begin{pmatrix}
\left(\tilde{x}_1^{II}\right)\T&\cdots&\left(\tilde{x}_N^{II}\right)\T
\end{pmatrix}\T$, and $\tilde{x}^{I}(t_0)$ and $\tilde{x}^{II}(t_0)$ are the initial value of $\tilde{x}$ at $t_0$.

Thus, from $(I\otimes K)\tilde{x}=0$ we obtain
\begin{multline*}
(I\otimes K)\tilde{x}\\
\quad=\left[(I\otimes K_1)(\tilde{x}^{I}(t_0)+t\tilde{x}^{II}(t_0))\quad(I\otimes K_2)\tilde{x}^{II}(t_0)\right]=0
\end{multline*}
i.e. $(I\otimes K_1)(\tilde{x}^{I}(t_0)+t\tilde{x}^{II}(t_0))=0$ and $(I\otimes K_2)\tilde{x}^{II}(t_0)=0$.
Since $K_1$ and $K_2$ negative definite, we can obtain $\tilde{x}^{I}(t_0)=\tilde{x}^{II}(t_0)=0$.
Thus, the invariance set $\{(\tilde{x},e): \dot{V}(\tilde{x},e)=0\}$ contains no trajectory of the system except the trivial trajectory $(\tilde{x},e)=(0,0)$.	
Therefore, system \eqref{newdcs} is globally asymptotically stable based on LaSalle's invariance principle. 
It means we have $\tilde{x}\to 0$ and $e\to0$ when $t\to \infty$. Thus we obtain $x_i\to x_r$ as $t\to \infty$, which prove our result.
\end{proof}

\subsection{Partial-state coupling}
In this subsection we consider MAS with partial-state coupling.

\vspace{0.5cm}

\begin{Protocol}[h]	\caption{{partial-state coupling} \label{p4psc}}
The following protocol is designed for each agent
$i\in\{1,\ldots,N\}$,
\begin{equation}\label{pscpd2}
\begin{system}{cll}
\dot{\hat{x}}_i &=& A\hat{x}_i+B\hat{\zeta}_{i2}+F(\bar{\zeta}_i-C\hat{x}_i)+\iota_iB\sigma(u_i) \\
\dot{\chi}_i &=& A\chi_i+B\sigma(u_i)+\hat{x}_i-\hat{\zeta}_{i1}-\iota_{i}\chi_i \\
u_i &=&  \rho K\chi_i,
\end{system}
\end{equation}
 where $\rho>0$ is a parameter with arbitrary positive value, and $F$ is a design matrix such that $A-FC$ is Hurwitz stable.  Then, we choose matrix $K=\begin{pmatrix}
 K_1&K_2
 \end{pmatrix}$ such that $K_i \in \mathbb{R}^{m\times m}$, $i=1,2$ are arbitrary negative definite matrices, while, $\hat{\zeta}_{i1}$ and $\hat{\zeta}_{i2}$ are defined as \eqref{add_1} and \eqref{add_3}, respectively and $\bar{\zeta}_i$ is defined via \eqref{zeta_l-n}.
\end{Protocol}

\vspace{0.5cm}

 We have the following theorem for scalable global regulated state synchronization problem based on localized information exchange for MAS with partial-state coupling and double-integrator agent models.
\begin{theorem}\label{mainthm4}
	Consider a MAS described by \eqref{eq1}, with \eqref{doubleAB} and $(A,C)$ observable, and the associated exosystem
	\eqref{solu-cond}. Let a set of nodes $\mathscr{C}$ be given which
	defines the set $\mathbb{G}_{\mathscr{C}}^N$. Let the associated
	network communication be given by \eqref{zeta_l-n}. 
	
	Then, the scalable global regulated state synchronization problem based on localized information exchange as stated in Problem
	\ref{prob2} is solvable. In particular, for any given $\rho>0$, the
	dynamic protocol \eqref{pscpd2} solves the scalable regulated state
	synchronization problem for any $N$ and any graph
	$\mathcal{G}\in\mathbb{G}_{\mathscr{C}}^N$. 
\end{theorem}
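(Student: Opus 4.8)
The plan is to follow the proof of Theorem~\ref{mainthm3} and graft onto it the observer-error analysis from the proof of Theorem~\ref{mainthm02}. First I would set $\tilde{x}_i=x_i-x_r$, $e_i=\tilde{x}_i-\chi_i$, and introduce the observer error through $\bar{e}=(\bar{L}\otimes I)\tilde{x}-\hat{x}$ with $\hat{x}=\col(\hat{x}_1,\ldots,\hat{x}_N)$. Substituting \eqref{pscpd2}, \eqref{add_1}, \eqref{add_3} and \eqref{zeta_l-n} and aggregating over the agents exactly as in the passage from \eqref{pscpm02} to \eqref{newsystem}, I expect the closed loop to take the cascade form
\[
\begin{system}{l}
\dot{\tilde{x}}=(I\otimes A)\tilde{x}+(I\otimes B)\sigma(u),\\
\dot{e}=(I\otimes A-\bar{L}\otimes I)e+\bar{e},\\
\dot{\bar{e}}=I\otimes(A-FC)\bar{e},\\
u=\rho(I\otimes K)(\tilde{x}-e).
\end{system}
\]
Since $(A,C)$ is observable, $F$ can be chosen with $A-FC$ Hurwitz, so $\bar{e}\to0$ exponentially and the $(\tilde{x},e)$ part is the system of Theorem~\ref{mainthm3} perturbed by the decaying signal $\bar{e}$ in the $e$-channel; by \eqref{boundapl} the block matrix $\bar{A}=\begin{pmatrix}I\otimes A-\bar{L}\otimes I&I\\0&I\otimes(A-FC)\end{pmatrix}$ is Hurwitz.

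Next I would take the Lyapunov function obtained by augmenting \eqref{lyapunvd1} with a quadratic form in $(e,\bar{e})$:
\[
V=\rho\,\tilde{x}\T\Bigl(I\otimes\begin{pmatrix}0&0\\0&P_d\end{pmatrix}\Bigr)\tilde{x}
+\begin{pmatrix}e\\\bar{e}\end{pmatrix}\T\tilde{P}\begin{pmatrix}e\\\bar{e}\end{pmatrix}
+2\int_0^u\sigma(s)\,ds,
\]
where $P_d=-K_1>0$ and $\tilde{P}>0$ solves $\tilde{P}\bar{A}+\bar{A}\T\tilde{P}\leq-(\rho\eps^{-1}\|K\|^2 c+1)I$ for a constant $c$ bounding the coupling gains, with $\eps$ chosen so that $K_2<-\tfrac{\eps}{2}I$ (possible since $K_2$ is negative definite). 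As in Theorem~\ref{mainthm3}, differentiating $V$ and using $KA+\begin{pmatrix}0&P_d\end{pmatrix}=\begin{pmatrix}0&K_1+P_d\end{pmatrix}=0$ eliminates the sign-indefinite term $2\rho\sigma\T(u)\bigl[I\otimes(KA+\begin{pmatrix}0&P_d\end{pmatrix})\bigr]\tilde{x}$. The remaining cross-terms --- $\sigma\T(u)$ against $e$ and against the newly present $\bar{e}$, both entering through $\dot{u}$, plus the $KB+B\T K\T$ contribution from $\dot{u}$ --- are absorbed by Young's inequality together with $K_2<-\tfrac{\eps}{2}I$, giving $\dot V\leq-\|e\|^2-\|\bar{e}\|^2+\rho\sigma\T(u)[I\otimes(KB+B\T K\T+\eps I)]\sigma(u)\leq0$. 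As with $P_D$ in Theorem~\ref{mainthm3}, the matrix $\tilde{P}$ is allowed to depend on the graph and on $N$, whereas the protocol \eqref{pscpd2} is not.

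Finally I would invoke LaSalle's invariance principle for the radially unbounded, globally defined $V$. On the set $\{\dot V=0\}$ one reads off $\bar{e}=0$, $e=0$ and $(I\otimes K)\tilde{x}=0$, so $\tilde{x}$ evolves as a pure double integrator, $\dot{\tilde{x}}^{I}=\tilde{x}^{II}$, $\dot{\tilde{x}}^{II}=0$; feeding $\tilde{x}^{I}(t)=\tilde{x}^{I}(t_0)+(t-t_0)\tilde{x}^{II}(t_0)$ into $(I\otimes K_1)\tilde{x}^{I}=0$ and $(I\otimes K_2)\tilde{x}^{II}=0$ and using negative definiteness of $K_1,K_2$ forces $\tilde{x}(t_0)=0$, exactly as in Theorem~\ref{mainthm3}. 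Hence the only invariant trajectory in $\{\dot V=0\}$ is $(\tilde{x},e,\bar{e})=(0,0,0)$, the closed loop is globally asymptotically stable, $x_i\to x_r$, and in particular \eqref{synch_org} holds for every $N$ and every $\mathcal{G}\in\mathbb{G}_{\mathscr{C}}^N$. The step I expect to be the main obstacle is the derivative bookkeeping for this composite quadratic-plus-integral function once $\bar{e}$ is coupled in: one must check that every extra cross term created by $\bar{e}$ entering $\dot e$ (and hence $\dot u$), as well as the simultaneous appearance of $\sigma(u)$ alongside both $e$ and $\bar{e}$, is still dominated after completing squares by the negative $-\|e\|^2-\|\bar{e}\|^2$ produced by the $\tilde{P}$-block, which is precisely what fixes the right-hand side constant in the Lyapunov inequality for $\tilde{P}$. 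A secondary technical point is verifying that $V$ is genuinely positive definite (asserted but not detailed for \eqref{lyapunvd1}), since the $\tilde{x}$-quadratic is only positive semidefinite and positivity has to come from combining it with the integral term.
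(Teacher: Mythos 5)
Your proposal matches the paper's own proof of Theorem \ref{mainthm4}: the same error coordinates $\tilde{x}$, $e=\tilde{x}-\chi$, $\bar{e}=(\bar{L}\otimes I)\tilde{x}-\hat{x}$, the same cascade closed-loop form, the same Lyapunov function (quadratic $P_d$-block in $\tilde{x}$, a quadratic form in $(e,\bar{e})$ solving a Lyapunov inequality for the block-triangular $\bar{A}$ with the $\gamma$-dependent right-hand side, plus $2\int_0^u\sigma(s)\,ds$), and the same LaSalle argument inherited from Theorem \ref{mainthm3}. The paper in fact gives even less detail than you do --- it stops after stating the Lyapunov function and defers everything to Theorems \ref{mainthm02} and \ref{mainthm3} --- so your additional remarks on the $\bar{e}$ cross-terms and on positive definiteness of $V$ are consistent elaborations, not deviations.
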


\begin{proof}[Proof of Theorem \ref{mainthm4}]
	Similar to Theorem \ref{mainthm3}, by defining $\tilde{x}_i=x_i-x_r$, $e=\tilde{x}-\chi$, and $\bar{e}=(\bar{L}\otimes I)\tilde{x}-\hat{x}$, we have the matrix expression of closed-loop system 
	\begin{equation}\label{newsysted}
	\begin{system*}{l}
	\dot{\tilde{x}}=(I\otimes A)\tilde{x}+ (I\otimes B)\sigma (u)\\
	\dot{e}=(I\otimes A-\bar{L}\otimes I)e+\bar{e}\\
	\dot{\bar{e}}=I\otimes (A-FC)\bar{e}\\
	u= \rho(I\otimes K) (\tilde{x}-e)
	\end{system*}
	\end{equation}
	
		Then we choose the following Lyapunov function:
	\begin{equation}\label{lyapunvd2}
	V= \rho\tilde{x}\T I\otimes \begin{pmatrix}
	0&0\\0&P_d
	\end{pmatrix}\tilde{x}+\begin{pmatrix}
	e\\\bar{e}
	\end{pmatrix}\T P_D\begin{pmatrix}
	e\\\bar{e}
	\end{pmatrix}+2\int_{0}^{u}\sigma(s)ds
	\end{equation}
	where  $P_d=-K_1$ and $P_D>0$ satisfies 
	\begin{equation}\label{condd4}
	P_D\bar{A}+\bar{A}\T P_D\leq -\gamma I
	\end{equation}
	where $\gamma=1+\rho\eps^{-1}\|K\|^2\|I\otimes {A}-\bar{L}\otimes I\|^2$, and $\eps$ is defined in the proof of Theorem \ref{mainthm3}, and $\bar{A}$ is defined in the proof of Theorem \ref{mainthm02}.
	
	Similar to the proof of Theorem \ref{mainthm3}, the synchronization result can be obtained.	
\end{proof}	

\section{MAS with Mixed-case Agents}

In this section, we will consider scalable global regulated state synchronization problem via for MAS with agent models mixed-case agents, in presence of input saturation for both networks with full and partial-state coupling. In the following assumption, we consider a class of systems which are introduced in \cite{saberi-stoorvogel-sannuti-exter}.

\begin{assumption}\label{Aass2}	
	We assume that $(A, B, C)$ is controllable and observable.
	Moreover, $A$ has eigenvalue zero with geometric multiplicity $m$ and algebraic multiplicity $m+q$ with no Jordan blocks of size larger than 2 while the remaining eigenvalues are simple purely imaginary eigenvalues.	
\end{assumption}

Obviously, this class of systems includes the neutrally stable dynamics, single- and double-integrator systems.

\subsection{Full-state coupling}
In this subsection, we design dynamic protocols for each agent via the following steps stated in Protocol \ref{p5fsc}.

\begin{Protocol}[h]	\caption{{full-state coupling} \label{p5fsc}}
\begin{itemize}
	\item First, similar to \cite[Section 4.7.1]{saberi-stoorvogel-sannuti-exter}, we use the following transformation for mixed-case agent models \eqref{eq1} by using non-singular transformation matrix $\Gamma_x$,
	\begin{equation*}
	\begin{system*}{cl}
	\tilde{A}=\Gamma_x A\Gamma_x^{-1}&=\begin{pmatrix}
	A_S&0&0\\
	0&A_F&0\\
	0&0&A_\omega
	\end{pmatrix}, \qquad\tilde{B}=\Gamma_x B=\begin{pmatrix}
	B_S\\
	B_F\\
	B_\omega
	\end{pmatrix}, \\&\tilde{C}=C\Gamma_x^{-1}=\begin{pmatrix}
	C_S&
	C_F&
	C_\omega
	\end{pmatrix}
	\end{system*}
	\end{equation*}
	where
	\[
	A_S=\begin{pmatrix}
	0&I\\0&0
	\end{pmatrix},\qquad A_F=0, \qquad A_\omega+A_\omega\T=0.
	\]
	
	\item We choose matrix $K$ so that 
	\begin{equation}
	K\tilde{A}+ \tilde{B}\T \Lambda=0 \label{cond1}
\end{equation}
	\begin{equation}
K\tilde{B}+\tilde{B}\T K\T <0 \label{cond2}
\end{equation}
	with 
	\[
	\Lambda=\begin{pmatrix}
	\Lambda_0&0&0\\0&0&0\\0&0&I
	\end{pmatrix}\text{ and }\Lambda_0=\begin{pmatrix}
	0&0\\0&P_d
	\end{pmatrix}
	\]
	where $P_d>0$ is any positive definite matrix. The existence of matrix $K$ is proved in \cite[Page 235]{saberi-stoorvogel-sannuti-exter}.
\item Next, the following protocol is designed for each agent
$i\in\{1,\ldots,N\}$,
\begin{equation}\label{pscpm1}
\begin{system}{cll}
\dot{\chi}_i &=& A\chi_i+B\sigma(u_i)+\bar{\zeta}_i-\hat{\zeta}_i-\iota_i\chi_i \\
u_i &=&\rho K\Gamma_x\chi_i,
\end{system}
\end{equation}
where $\rho>0$ is a parameter with arbitrary positive value, $\hat{\zeta}_i$ and $\bar{\zeta}_i$ are defined by \eqref{info1} and \eqref{zeta_l-n}, respectively.
\end{itemize}
\end{Protocol}
\pagebreak

 We have the following theorem for scalable global regulated state synchronization problem based on localized information exchange for MAS with full-state coupling and mixed-case agent models.
\begin{theorem}\label{mainthm5}
	Consider a MAS described by \eqref{eq1} with $C=I$ satisfying Assumption \ref{Aass2}, and the associated exosystem
	\eqref{solu-cond}. Let a set of nodes $\mathscr{C}$ be given which
	defines the set $\mathbb{G}_{\mathscr{C}}^N$. Let the associated
	network communication be given by \eqref{zeta_l-n}. 
	
	Then, the scalable global regulated state synchronization problem based on localized information exchange for MAS with full-state coupling as stated in Problem
	\ref{prob2} is solvable. In particular, for any given $\rho>0$, the
	dynamic protocol \eqref{pscpm1} with \eqref{cond1} and \eqref{cond2} solves the regulated state
	synchronization problem for any $N$ and any graph
	$\mathcal{G}\in\mathbb{G}_{\mathscr{C}}^N$. 
\end{theorem}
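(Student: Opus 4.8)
The plan is to mirror the argument used for Theorems \ref{mainthm0} and \ref{mainthm3}, but now working in the transformed coordinates dictated by the structural decomposition of Assumption \ref{Aass2}. First I would introduce $\tilde{x}_i = x_i - x_r$, $e_i = \tilde{x}_i - \chi_i$, and stack these into $\tilde{x}, e$. Using \eqref{zeta_l-n} and \eqref{info1} together with the cancellation $\bar{\zeta}_i - \hat{\zeta}_i - \iota_i\chi_i = \bar{\zeta}_i - \hat{\zeta}_i - \iota_i\tilde{x}_i + \iota_i e_i$ exactly as in the neutrally stable case, I expect to arrive at the closed-loop system
\begin{equation*}
\begin{system*}{l}
\dot{\tilde{x}}=(I\otimes A)\tilde{x}+ (I\otimes B)\sigma(u),\\
\dot{e}=(I\otimes A-\bar{L}\otimes I)e,\\
u=\rho(I\otimes K\Gamma_x)(\tilde{x}-e).
\end{system*}
\end{equation*}
Since $\bar{L}$ has all eigenvalues with positive real part and $A$ has all eigenvalues in the closed left half plane (it is neutrally stable under Assumption \ref{Aass2}), the argument around \eqref{boundapl} gives that $I\otimes A - \bar{L}\otimes I$ is Hurwitz, so the $e$-subsystem decays exponentially on its own.

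The core of the proof is the Lyapunov analysis for the $\tilde{x}$-subsystem. I would change variables to $\bar{x}_i = \Gamma_x \tilde{x}_i$ so that the agent dynamics become block-diagonal $(\tilde{A}, \tilde{B})$, and take
\[
V = \rho\,\bar{x}\T(I\otimes\Lambda)\bar{x} + e\T P_D e + 2\int_0^u \sigma(s)\,ds,
\]
with $P_D>0$ chosen to satisfy an inequality of the form \eqref{condd3} that dominates the cross terms involving $(I\otimes A - \bar L\otimes I)e$. Differentiating, the term $2\rho\,\sigma\T(u)(I\otimes(K\tilde A + \tilde B\T\Lambda))\bar{x}$ vanishes identically by \eqref{cond1}; the quadratic saturation term is controlled using \eqref{cond2} together with the elementary inequality $2u\T\sigma(u) \geq \|\sigma(u)\|^2$ (each scalar component satisfying $u^k\sigma(u^k)\geq|\sigma(u^k)|^2$); and the remaining cross term $-2\rho\,\sigma\T(u)(I\otimes K\Gamma_x)(I\otimes A - \bar L\otimes I)e$ is absorbed via Young's inequality into an $\eps\|\sigma(u)\|^2$ term (compensated by \eqref{cond2} having strict room) and a $\|e\|^2$ term (compensated by enlarging $\gamma$ in the $P_D$ inequality), exactly as in the proof of Theorem \ref{mainthm3}. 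This yields $\dot V \leq -\|e\|^2 + \rho\,\sigma\T(u)(I\otimes(K\tilde B + \tilde B\T K\T + \eps I))\sigma(u) \leq 0$ for $\eps$ small enough.

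Finally I would invoke LaSalle's invariance principle. On the set $\dot V = 0$ we have $e=0$ and $(I\otimes K\tilde B + \dots)\sigma(u)=0$, forcing $\sigma(u)=0$ and hence $u=0$, i.e.\ $(I\otimes K\Gamma_x)\tilde{x}=0$; then $\tilde{x}$ evolves under $\dot{\tilde{x}} = (I\otimes A)\tilde x$. The task is to show the only such trajectory staying in the set is $\tilde{x}\equiv0$. Writing $\bar{x} = (I\otimes\Gamma_x)\tilde x$ and splitting into the $A_S$ (double-integrator), $A_F=0$ (integrator), and $A_\omega$ (oscillatory) blocks, the constraint $K\bar x(t)\equiv 0$ along solutions must be shown to annihilate each block: for the oscillatory and integrator parts one uses an observability/controllability-type argument as in \cite[Page 235]{saberi-stoorvogel-sannuti-exter} guaranteeing $K$ restricted to those blocks is injective along trajectories, and for the $A_S$ block one argues exactly as in Theorem \ref{mainthm3} that a polynomial-in-$t$ solution killed by the relevant sub-block of $K$ must be zero. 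I expect this LaSalle step — verifying that the mixed block structure leaves no nontrivial invariant trajectory — to be the main obstacle, since it is where Assumption \ref{Aass2} (no Jordan blocks larger than size 2, simple imaginary eigenvalues) and the precise construction of $K$ in \cite{saberi-stoorvogel-sannuti-exter} are genuinely used, rather than just borrowed wholesale from the earlier proofs. Once $\tilde x = 0$ and $e=0$ is the only invariant trajectory, global asymptotic stability follows, giving $x_i \to x_r$ and hence \eqref{synch_org}.
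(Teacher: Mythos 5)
Your proposal follows essentially the same route as the paper's proof: the same error coordinates $\tilde{x}_i=x_i-x_r$, $e_i=\tilde{x}_i-\chi_i$, the same transformation by $I\otimes\Gamma_x$ into the $(\tilde{A},\tilde{B})$ coordinates, the same Lyapunov function $\rho\,\eta_x\T(I\otimes\Lambda)\eta_x+e\T P e+2\int_0^u\sigma(s)\,ds$ with \eqref{cond1} cancelling the linear cross term, \eqref{cond2} plus Young's inequality absorbing the rest, and LaSalle reducing everything to showing $(I\otimes K)\eta_x\equiv 0$ along $\dot{\eta}_x=(I\otimes\tilde{A})\eta_x$ forces $\eta_x=0$. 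The only cosmetic difference is that the paper dispatches that final invariance step by citing the proof of Theorem 4.61 in \cite{saberi-stoorvogel-sannuti-exter}, whereas you sketch the block-by-block argument ($A_S$, $A_F$, $A_\omega$) explicitly, which is a fair rendering of what that reference does.
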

\begin{proof}[Proof of Theorem \ref{mainthm5}]
	Firstly, we have
	\begin{equation*}
	\begin{system*}{l}
	\dot{\tilde{x}}_i=A\tilde{x}_i+ B\sigma (u_i),\\
	\dot{e}_i=Ae_i-\sum_{j=1}^{N}\bar{\ell}_{ij}e_j,\\
	u_i=\rho K\Gamma_x(\tilde{x}_i-e_i)
	\end{system*}	
	\end{equation*}
	by $\tilde{x}_i=x_i-x_r$ and $e_i=\tilde{x}_i-\chi_i$. Then, let
	\[
	\tilde{x}=\begin{pmatrix}
	\tilde{x}_1\\\vdots\\\tilde{x}_N
	\end{pmatrix} ,
	u=\begin{pmatrix}
	u_1\\\vdots\\u_N
	\end{pmatrix}, 
	e=\begin{pmatrix}
	e_1\\\vdots\\e_N
	\end{pmatrix}, \text{ and } 
	\sigma(u)=\begin{pmatrix}
	\sigma(u_1)\\\vdots\\\sigma(u_N)
	\end{pmatrix}
	\]
	then we have the following closed-loop system
	\begin{equation}\label{newcs}
	\begin{system}{l}
	\dot{\tilde{x}}=(I\otimes A)\tilde{x}+ (I\otimes B)\sigma (u),\\
	\dot{e}=(I\otimes A-\bar{L}\otimes I)e,\\
	u=\rho I\otimes K\Gamma_x(\tilde{x}-e).
	\end{system}	
	\end{equation}
	
	We transform mixed-case agent model \eqref{newcs} as
	\begin{equation}\label{newcs2}
	\begin{system}{l}
	\dot{\eta}=\begin{pmatrix}
	I\otimes \tilde{A}&0\\0&I\otimes \tilde{A}-\bar{L}\otimes I
	\end{pmatrix}\eta+\begin{pmatrix}
	I\otimes \tilde{B}\\0
	\end{pmatrix}\sigma \left(u\right)\\
	u=\rho I\otimes \begin{pmatrix}
	K&-K
	\end{pmatrix}\eta
	\end{system}
	\end{equation}
	by a non-singular matrix $I\otimes \Gamma_x$, where $\eta=\begin{pmatrix}
	{\eta}_x\T&{\eta}_e\T
	\end{pmatrix}\T=\begin{pmatrix}
	(I\otimes \Gamma_x\T) \tilde{x}\T& (I\otimes \Gamma_x\T)e\T
	\end{pmatrix}\T$.

	Next, we choose the following Lyapunov function:
	\begin{equation}\label{lyapunv1}
	V=\eta\T \begin{pmatrix}
	\rho I\otimes\Lambda&0\\0&P_0
	\end{pmatrix}\eta+2\int_{0}^{u}\sigma(s)ds
	\end{equation}
	where $P_0>0$ satisfies 
	\begin{equation}\label{cond3}
	P_0(I\otimes \tilde{A}-\bar{L}\otimes I)+(I\otimes \tilde{A}-\bar{L}\otimes I)\T P_0\leq -\gamma I
	\end{equation}
	with $\gamma=1+\rho\eps^{-1}\|K\|^2\|I\otimes \tilde{A}-\bar{L}\otimes I\|^2$, where $\eps$ is such that $K\tilde{B}+\tilde{B}\T K\T <-\eps I$, note that \eqref{cond2} guarantees existence of $\eps$.  It can be shown that $V$ is positive definite, i.e. $V>0$ except for $(\tilde{x}, e)=0$ when $V=0$. Then, we have
	\begin{align*}
	\nonumber\frac{dV}{dt}=&
	2\eta\T \begin{pmatrix}
	\rho I\otimes(\Lambda \tilde{A}) &0\\0&P_0(I\otimes \tilde{A}-\bar{L}\otimes I)
	\end{pmatrix} \eta\\
	\nonumber& +2\rho\eta\T \begin{pmatrix}
	I\otimes(\Lambda \tilde{B})\\0
	\end{pmatrix}\sigma \left(u\right)\\
	\nonumber&+2\rho \sigma\T\left(u\right) \begin{pmatrix}
	I\otimes (K\tilde{A})&-(I\otimes K\tilde{A}-\bar{L}\otimes K)
	\end{pmatrix}\eta\\
	\nonumber&+2\rho\sigma\T\left(u\right)I\otimes (K\tilde{B})\sigma\left(u\right)\\
	\leq & -\gamma\eta_e\T \eta_e+2\rho\sigma\T(u)(I\otimes K\tilde{B})\sigma(u)\\
	&-2\rho\sigma\T(u)(I\otimes K)(I\otimes\tilde{A}-\bar{L}\otimes I)\eta_e\\
	\leq&-\gamma\eta_e\T \eta_e+\eps^{-1}\rho\|K\|^2\|I\otimes\tilde{A}-\bar{L}\otimes I\|^2\eta_e\T\eta_e\\
	&+\rho\sigma\T(u)(I\otimes(K\tilde{B}+\tilde{B}\T K\T+\eps I))\sigma(u)\\
	=&-\|\eta_e\|^2+\rho\sigma\T(u)\left[I\otimes(K\tilde{B}+\tilde{B}\T K\T+\eps I)\right]\sigma(u)
	\end{align*}
	since we have \eqref{cond1} and \eqref{cond3}. Because $(\tilde{A}, \tilde{B})$ is surjective, we have a solution $K$ such that
	$
	\frac{dV}{dt}\leq  0
	$
	provided \eqref{cond2}. 
	
	Then, we note that the $\frac{dV}{dt}=0$ when $(I\otimes K)\eta_x=0$ and $\eta_e=0$, the
	dynamics of \eqref{newcs2} becomes $\dot{\eta}_x=(I\otimes \tilde{A})\eta_x$.
		
	Similar to the proof of \cite[Theorem 4.61]{saberi-stoorvogel-sannuti-exter} with \eqref{cond1} and \eqref{cond2}, we can obtain $(I\otimes K)\eta_x=0$ only when $\eta_x(t_0)=0$.
	
	Thus, we obtain the global asymptotic stability of the closed-loop system \eqref{newcs2}, i.e. we have $\eta_x\to 0$. It implies $\tilde{x}\to 0$ since $(I
	\otimes\Gamma_x^{-1})\eta_x\to 0$ when $t\to \infty$, and thus we have $x_i\to x_r$ as $t\to \infty$, which prove our result.
\end{proof}

\subsection{Partial-state coupling}
In this subsection we consider MAS with partial-state coupling.

\vspace{0.5cm}

\begin{Protocol}[h]	\caption{{partial-state coupling} \label{p6psc}}
The following protocol is designed for each agent
$i\in\{1,\ldots,N\}$,
\begin{equation}\label{pscpm2}
\begin{system}{cll}
\dot{\hat{x}}_i &=& A\hat{x}_i+B\hat{\zeta}_{i2}+F(\bar{\zeta}_i-C\hat{x}_i)+\iota_iB\sigma(u_i) \\
\dot{\chi}_i &=& A\chi_i+B\sigma(u_i)+\hat{x}_i-\hat{\zeta}_{i1}-\iota_{i}\chi_i \\
u_i &=& \rho K\Gamma_x\chi_i,
\end{system}
\end{equation}
where $F$ is a design matrix such that $A-FC$ is Hurwitz stable, 
$\Gamma_x$ is a non-singular matrix, $\rho>0$ is a parameter with arbitrary positive value, and $K$ satisfies \eqref{cond1} and \eqref{cond2}, where $\hat{\zeta}_{i1}$ and $\hat{\zeta}_{i2}$ are defined as \eqref{add_1} and \eqref{add_3}, respectively and $\bar{\zeta}_i$ is defined via \eqref{zeta_l-n}.
\end{Protocol}

\vspace{0.5cm}

 We have the following theorem for scalable global regulated state synchronization problem based on localized information exchange for MAS with partial-state coupling and mixed-case agent models.

\begin{theorem}\label{mainthm6}
	Consider a MAS described by \eqref{eq1} satisfying Assumption \ref{Aass2}, and the associated exosystem
	\eqref{solu-cond}. Let a set of nodes $\mathscr{C}$ be given which
	defines the set $\mathbb{G}_{\mathscr{C}}^N$. Let the associated
	network communication be given by \eqref{zeta_l-n}. 
	
	Then, the scalable global regulated state synchronization problem based on localized information exchange as stated in Problem
	\ref{prob2} is solvable. In particular, for any given $\rho>0$, the
	dynamic protocol \eqref{pscpm2} with \eqref{cond1} and \eqref{cond2} solves the scalable regulated state
	synchronization problem for any $N$ and any graph
	$\mathcal{G}\in\mathbb{G}_{\mathscr{C}}^N$. 
\end{theorem}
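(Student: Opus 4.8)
The plan is to combine the observer-augmentation argument from the proof of Theorem \ref{mainthm02} (partial-state case for neutrally stable agents) with the change-of-coordinates-and-Lyapunov argument from the proof of Theorem \ref{mainthm5} (full-state mixed-case). First I would introduce the error variables $\tilde{x}_i = x_i - x_r$, $e = \tilde{x} - \chi$, and $\bar{e} = (\bar{L}\otimes I)\tilde{x} - \hat{x}$, exactly as in Theorem \ref{mainthm4}. Writing out the dynamics of $\hat{x}_i$ and $\chi_i$ from Protocol \ref{p6psc} and using $\hat{\zeta}_{i1},\hat{\zeta}_{i2}$ from \eqref{add_1}, \eqref{add_3} together with $\bar{\zeta}_i$ from \eqref{zeta_l-n}, the $\iota_i$-terms and the saturation feedthrough terms cancel in the same fashion as before, yielding the closed-loop system
\begin{equation*}
\begin{system*}{l}
\dot{\tilde{x}}=(I\otimes A)\tilde{x}+ (I\otimes B)\sigma (u),\\
\dot{e}=(I\otimes A-\bar{L}\otimes I)e+\bar{e},\\
\dot{\bar{e}}=I\otimes (A-FC)\bar{e},\\
u= \rho(I\otimes K\Gamma_x) (\tilde{x}-e),
\end{system*}
\end{equation*}
which is structurally identical to \eqref{newsysted} but with the mixed-case feedback gain $K\Gamma_x$ in place of $\rho K$.

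Next I would apply the block transformation $\eta_x = (I\otimes\Gamma_x)\tilde{x}$, $\eta_e = (I\otimes\Gamma_x)e$ used in \eqref{newcs2}, leaving $\bar{e}$ untouched (or, if convenient, also transforming it), to put the $\tilde{x}$- and $e$-subsystems in the decoupled form with $I\otimes\tilde{A}$ and $I\otimes\tilde{A}-\bar{L}\otimes I$ on the diagonal, and the $\sigma(u)$ channel entering only the $\eta_x$ block as in \eqref{newcs2}. The candidate Lyapunov function is the one from \eqref{lyapunvd2}/\eqref{lyapunv1} augmented with the observer error: $V = \eta_x\T(\rho I\otimes\Lambda)\eta_x + \begin{pmatrix} e \\ \bar{e} \end{pmatrix}\T P_D \begin{pmatrix} e \\ \bar{e} \end{pmatrix} + 2\int_0^u \sigma(s)\,ds$, where $P_D>0$ solves $P_D\bar{A}+\bar{A}\T P_D \leq -\gamma I$ with $\bar{A}$ as in the proof of Theorem \ref{mainthm02} and $\gamma$ chosen (as in Theorem \ref{mainthm5}) to dominate the cross terms $-2\rho\sigma\T(u)(I\otimes K)(I\otimes\tilde{A}-\bar{L}\otimes I)e$ via Young's inequality with parameter $\eps$, where $K\tilde{B}+\tilde{B}\T K\T < -\eps I$ by \eqref{cond2}. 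Differentiating, the $\Lambda\tilde{A}$ and $K\tilde{A}$ terms cancel against $\Lambda\tilde{B}$ and $K\tilde{B}$ exactly because of \eqref{cond1}, and the Hurwitz block $A-FC$ ensures the $\bar{e}$ part contributes a negative term; this yields $\dot V \leq -\|e\|^2 - \|\bar{e}\|^2 + \rho\sigma\T(u)(I\otimes(K\tilde{B}+\tilde{B}\T K\T+\eps I))\sigma(u) \leq 0$ (modulo harmless relabeling of which negative-definite quadratic in $(e,\bar e)$ one keeps).

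Finally I would invoke LaSalle's invariance principle: on the set $\dot V = 0$ one forces $e=0$, $\bar{e}=0$, and $(I\otimes K)\eta_x=0$, so that $\dot\eta_x = (I\otimes\tilde{A})\eta_x$; then, exactly as in the proof of Theorem \ref{mainthm5} (which cites \cite[Theorem 4.61]{saberi-stoorvogel-sannuti-exter} and uses \eqref{cond1}, \eqref{cond2}), the constraint $(I\otimes K)\eta_x \equiv 0$ along this flow is possible only if $\eta_x(t_0)=0$, so the largest invariant set in $\{\dot V=0\}$ is the origin. Hence the closed loop is globally asymptotically stable, $\eta_x\to 0$ gives $\tilde{x}\to 0$, and $x_i \to x_r$; since this holds for every $N$ and every $\mathcal{G}\in\mathbb{G}_{\mathscr{C}}^N$ (the only graph-dependent object, $\bar L$, enters solely through having eigenvalues with positive real part, which is used exactly once, to solve for $P_D$), the scalability claim follows. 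The main obstacle is bookkeeping: verifying that the observer subsystem $\bar e$ really does decouple cleanly (i.e. that $\hat\zeta_{i2}$ built from $\sigma(u_i)$ is precisely what is needed to kill the saturation term in the $\bar e$-equation) and that the added $\bar e$ block does not spoil the delicate cancellation \eqref{cond1} that makes $\dot V$ sign-definite; everything else is a direct transcription of the two prior proofs.
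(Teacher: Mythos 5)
Your proposal is correct and follows essentially the same route as the paper: the paper likewise forms $\tilde{x}$, $e$, $\bar{e}$, transforms by $I\otimes\Gamma_x$ (it also transforms the $\bar{e}$ block, so its $\hat{A}$ uses $\tilde{A}-\Gamma_x F\tilde{C}$, which is equivalent to your untransformed $A-FC$ version up to similarity), and uses the Lyapunov function $\bar\eta\T\diag(\rho I\otimes\Lambda,\,P_0)\bar\eta+2\int_0^u\sigma(s)\,ds$ with the same $\gamma$ and the same reduction to the Theorem \ref{mainthm5} argument. Your write-up is in fact more explicit than the paper's, which simply states ``similar to the proof of Theorem \ref{mainthm5}, the synchronization result can be obtained.''
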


\begin{proof}[Proof of Theorem \ref{mainthm6}]
	Similar to Theorem \ref{mainthm3}, by defining $\tilde{x}_i=x_i-x_r$, $e=\tilde{x}-\chi$, and $\bar{e}=(\bar{L}\otimes I)\tilde{x}-\hat{x}$, we have the matrix expression of closed-loop system 
	\begin{equation}\label{newsystemm}
	\begin{system*}{l}
	\dot{\tilde{x}}=(I\otimes A)\tilde{x}+ (I\otimes B)\sigma (u)\\
	\dot{e}=(I\otimes A-\bar{L}\otimes I)e+\bar{e}\\
	\dot{\bar{e}}=I\otimes (A-FC)\bar{e}\\
	u= \rho(I\otimes K\Gamma_x) (\tilde{x}-e)
	\end{system*}
	\end{equation}

	Then, by using nonsingular matrix $I\otimes \Gamma_x$, we can obtain 
	\begin{equation}
	\begin{system*}{l}
	\dot{\eta}_x=(I\otimes \tilde{A})\eta_x+ (I\otimes \tilde{B})\sigma (u)\\
	\dot{\eta}_e=(I\otimes \tilde{A}-\bar{L}\otimes I)\eta_e+\eta_{\bar{e}}\\
	\dot{\eta}_{\bar{e}}=I\otimes (\tilde{A}-\Gamma_x F\tilde{C})\eta_{\bar{e}}\\
	u= \rho(I\otimes K) (\eta_x-\eta_e)
	\end{system*}
	\end{equation}
	where $\eta_x=(I\otimes \Gamma_x)\tilde{x}$, $\eta_e=(I\otimes \Gamma_x) e$, and $\eta_{\bar{e}}=(I\otimes \Gamma_x){\bar{e}}$. 
		
	Then we choose the following Lyapunov function:
	\begin{equation}\label{lyapunv2}
	V= \bar{\eta}\T \begin{pmatrix}
	\rho I\otimes\Lambda&0\\0&P_0
	\end{pmatrix}\bar{\eta}+2\int_{0}^{u}\sigma(s)ds
	\end{equation}
	where $\bar{\eta}=\begin{pmatrix}
	\eta_x\T&\eta_e\T&\eta_{\bar{e}}\T
	\end{pmatrix}\T$ and $P_0>0$ satisfies 
	\begin{equation}\label{cond4}
	P_0\hat{A}+\hat{A}\T P_0\leq -\gamma I
	\end{equation}
	where $\gamma=1+\eps^{-1}\rho\|K\|^2\|I\otimes \tilde{A}-\bar{L}\otimes I\|^2$, and $\eps$ is the same as in the proof of Theorem \ref{mainthm5} and
	\[
	\hat{A}=\begin{pmatrix}
	I\otimes \tilde{A}-\bar{L}\otimes I&I\\0&I\otimes (\tilde{A}-\Gamma_x F\tilde{C})
	\end{pmatrix}.
	\]	
	
	Thus, similar to the proof of Theorem \ref{mainthm5}, the synchronization result can be obtained.
\end{proof}
\vspace*{-6mm}
\begin{remark}
	It is worth to note that in all of the protocols for MAS with neutrally stable, double-integrator, and mixed of double-integrator, single-integrator and neutrally stable dynamics, the choice of positive parameter $\rho$ is independent of the communication graph and as such it establishes infinite gain margin for our protocols.
\end{remark}


\section{Numerical Example}
In this section, we will illustrate the effectiveness of our protocols with numerical examples for global synchronization of MAS with double-integrator and mixed-case agent models with partial-state coupling. 
\begin{figure}[t]
	\includegraphics[width=4cm, height=2.5cm]{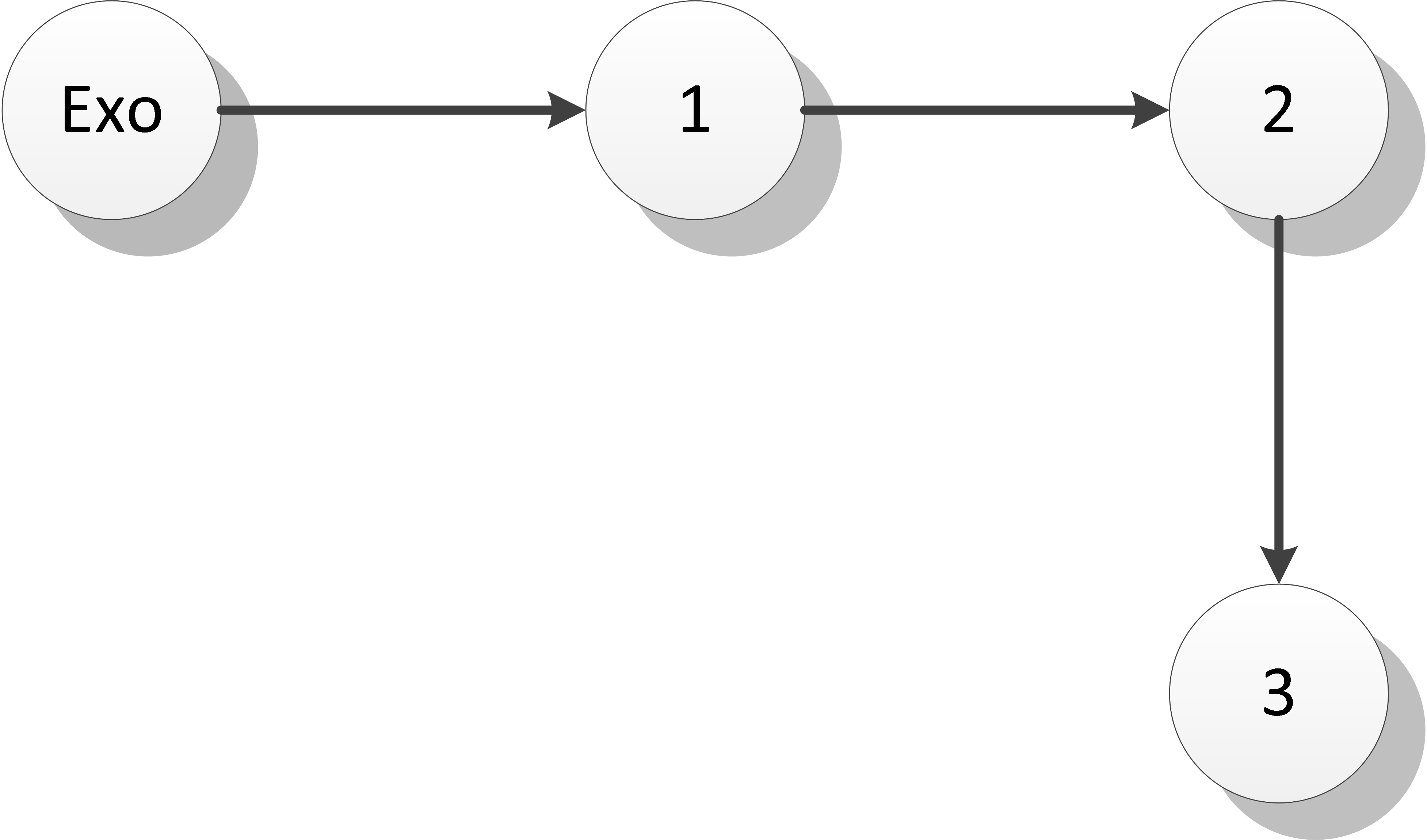}
	\centering
	\caption{The directed communication network $1$}\label{graph_3nodes}
\end{figure}
\begin{figure}[t]
	\includegraphics[width=7cm, height=2.5cm]{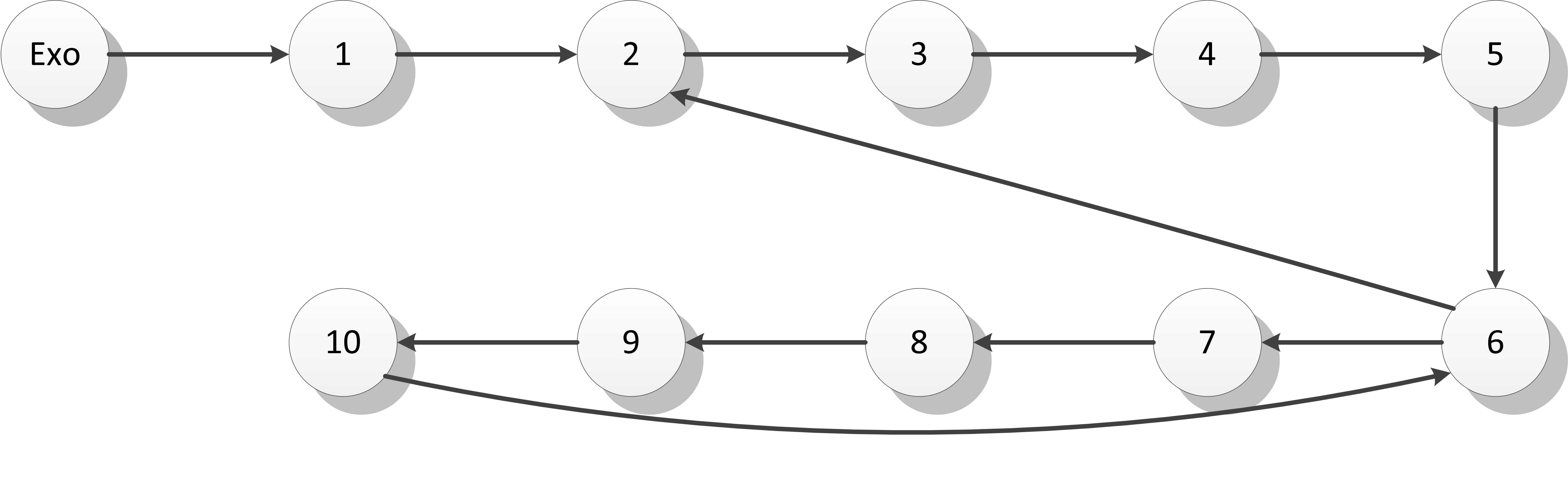}
	\centering	
	\caption{The directed communication network $2$}\label{graph_10nodes}
\end{figure}
\subsection*{Example 1: Double-integrator}
Consider a MAS with double-integrators agent models \eqref{eq1} as:
\begin{equation*}\label{ex_2}
\begin{system*}{cl}
\dot{x}_i&=\begin{pmatrix}
0&1\\0&0
\end{pmatrix}x_i+\begin{pmatrix}
0\\1
\end{pmatrix}\sigma(u_i),\\
y_i&=\begin{pmatrix}
1&0
\end{pmatrix}x_i
\end{system*}
\end{equation*}
and the exosystem:
\begin{equation*}\label{exo_ex1}
\begin{system*}{cl}
\dot{x}_r=\begin{pmatrix}
0&1\\0&0
\end{pmatrix}x_r,\quad
y_r=\begin{pmatrix}
1&0
\end{pmatrix}x_r
\end{system*}
\end{equation*}

By choosing parameter $\rho=1$ and matrices $F$ and $K$ as
\begin{equation*}
F=\begin{pmatrix}
1\\2
\end{pmatrix}, \quad K=\begin{pmatrix}
-10&-2
\end{pmatrix}
\end{equation*}
the scalable Protocol \ref{p4psc} would be equal to 
\begin{equation}\label{pscpd2-Ex}
\begin{system}{cll}
\dot{\hat{x}}_i &=& \begin{pmatrix}
-1&1\\-2&0
\end{pmatrix}\hat{x}_i+\begin{pmatrix}
0\\1
\end{pmatrix}\hat{\zeta}_{i2}+\begin{pmatrix}
1\\2
\end{pmatrix}\bar{\zeta}_i+\iota_i\begin{pmatrix}
0\\1
\end{pmatrix}\sigma(u_i) \\
\dot{\chi}_i &=& \begin{pmatrix}
0&1\\0&0
\end{pmatrix}\chi_i+\begin{pmatrix}
0\\1
\end{pmatrix}\sigma(u_i)+\hat{x}_i-\hat{\zeta}_{i1}-\iota_{i}\chi_i \\
u_i &=&  \begin{pmatrix}
-10&-2
\end{pmatrix}\chi_i,
\end{system}
\end{equation}
where $\iota_1=1$ and $\iota_{i}=0$ for $i=\{1,\hdots, N\}$.
First, consider a MAS with $3$ nodes and communication graph as Figure \ref{graph_3nodes}. 

To illustrate the scalibility of our protocols we show that the designed protocol will also work for MAS with $10$ nodes with communication topology as Figure \ref{graph_10nodes}.

 The simulation results are shown in Figure \ref{results_double_case3Nodes} and Figure \ref{results_double_case10Nodes} for MAS with $3$ and $10$ agents, respectively.

\subsection*{Example 2: Mixed-case }
 In this example, we consider MAS with mixed-case agent model which contains two double-integrator, one single-integrator and neutrally stable dynamics as:
\begin{equation*}\label{ex}
\begin{system*}{cl}
\dot{x}_i&=\begin{pmatrix}
0&0&1&0&0&0&0\\0&0&0&1&0&0&0\\0&0&0&0&0&0&0\\0&0&0&0&0&0&0\\0&0&0&0&0&0&0\\0&0&0&0&0&0&1\\0&0&0&0&0&-1&0
\end{pmatrix}x_i+ \begin{pmatrix}
0&1&3\\0&0&5\\1&2&4\\0&1&6\\0&0&1\\1&1&0\\1&0&1
\end{pmatrix}\sigma(u_i)\\
y_i&= \begin{pmatrix}
1&1&1&1&1&1&1\\1&0&0&0&0&0&0\\0&0&0&0&0&0&1\\0&0&1&1&1&1&1
\end{pmatrix}x_i
\end{system*}
\end{equation*}

and the associated exosystem:
\begin{equation*}\label{exo_ex2}
\begin{system*}{cl}
\dot{x}_r=\begin{pmatrix}
0&0&1&0&0&0&0\\0&0&0&1&0&0&0\\0&0&0&0&0&0&0\\0&0&0&0&0&0&0\\0&0&0&0&0&0&0\\0&0&0&0&0&0&1\\0&0&0&0&0&-1&0
\end{pmatrix}x_r,\quad
y_r= \begin{pmatrix}
1&1&1&1&1&1&1\\1&0&0&0&0&0&0\\0&0&0&0&0&0&1\\0&0&1&1&1&1&1
\end{pmatrix}x_r
\end{system*}
\end{equation*}
 \begin{figure}[t]
	\includegraphics[width=8cm, height=6.5cm]{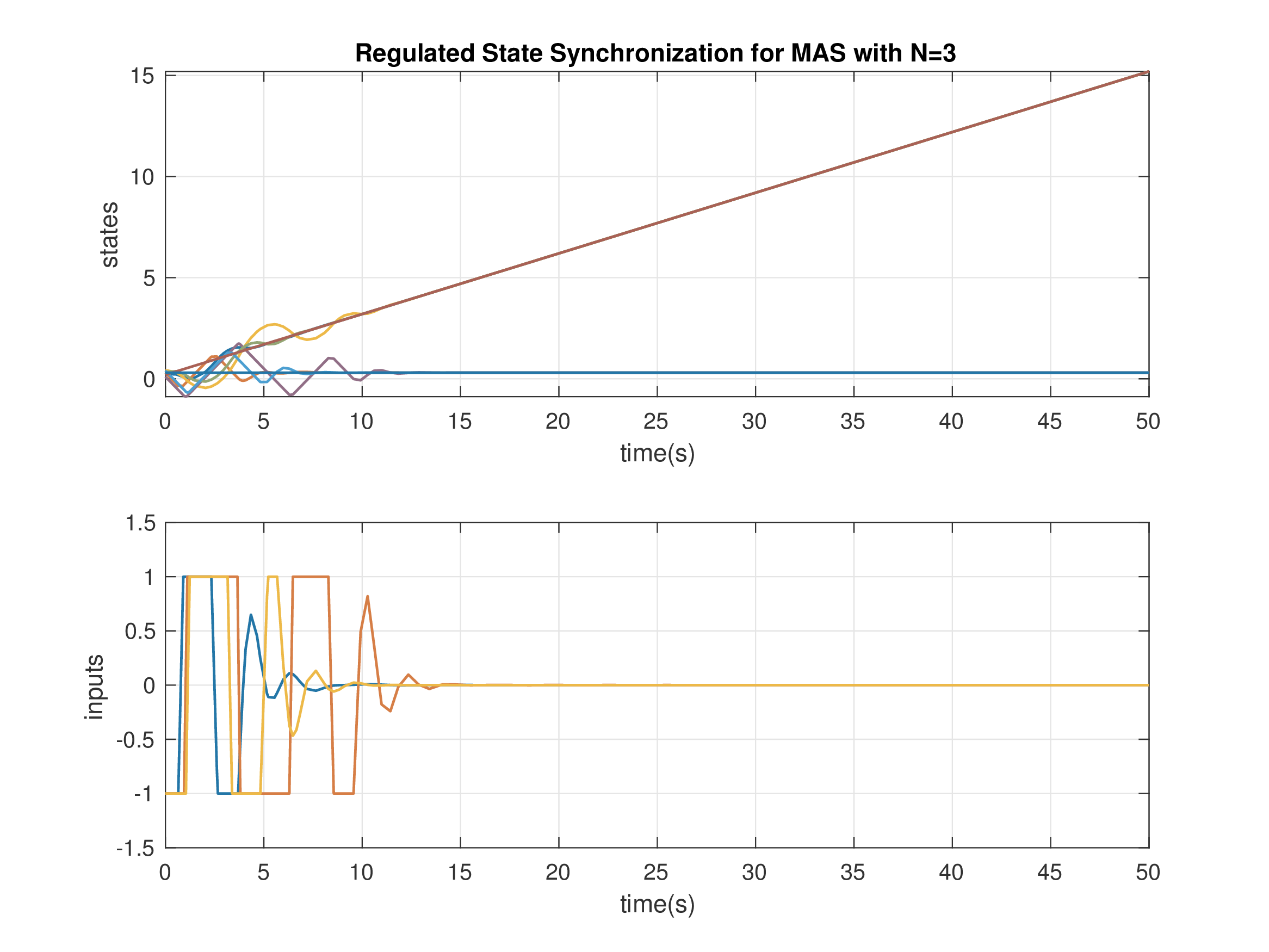}
	\centering
	\caption{Regulated state synchronization for MAS with double-integrator agents, partial-state coupling and $3$ agents}\centering \label{results_double_case3Nodes}
\end{figure}
\begin{figure}[t]
	\includegraphics[width=8cm, height=6.5cm]{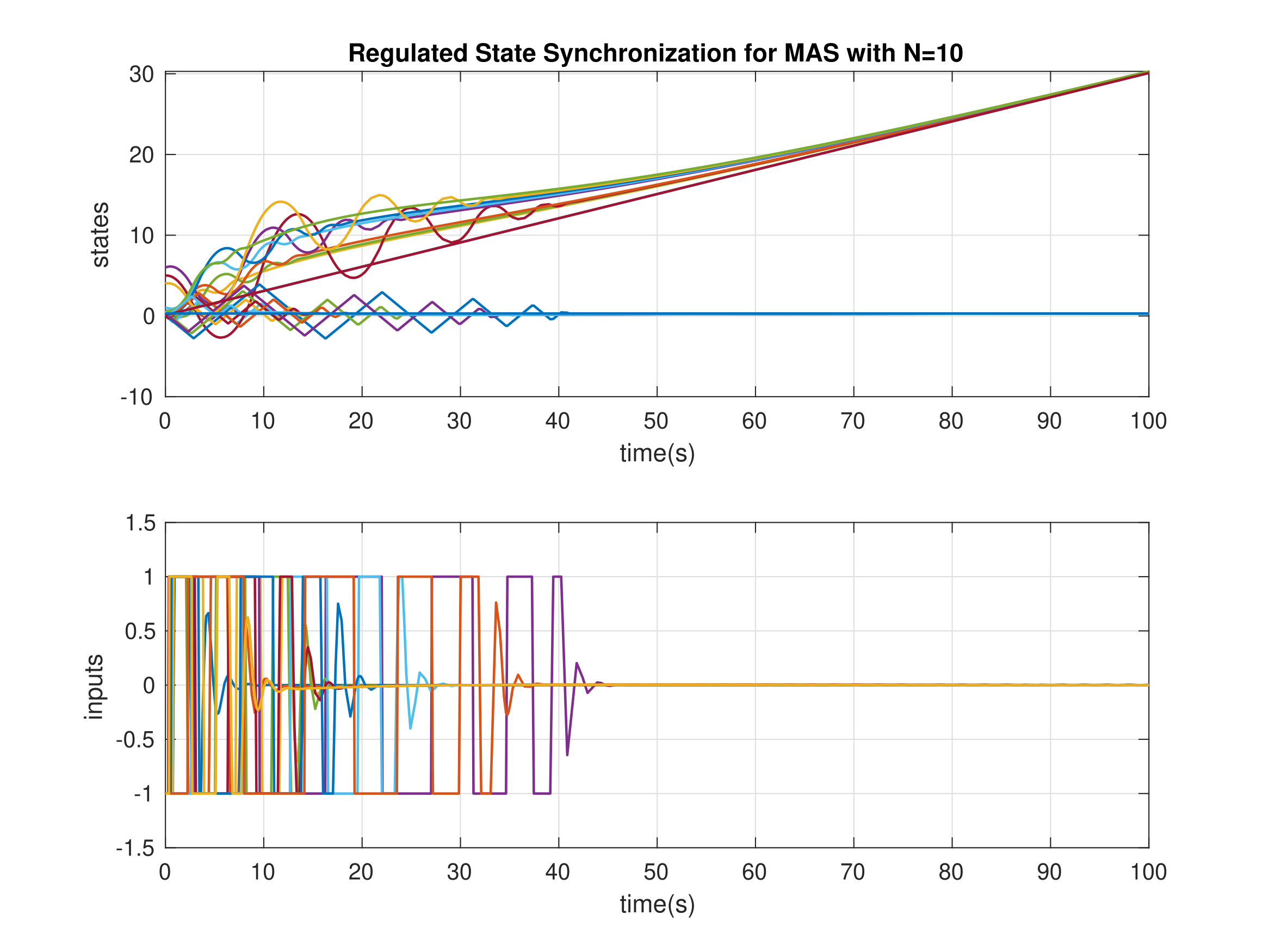}
	\centering
	\caption{Regulated state synchronization for MAS with double-integrator agents, partial-state coupling and $10$ agents}\label{results_double_case10Nodes}
\end{figure}
\begin{figure}[t]
	\includegraphics[width=8cm, height=6.5cm]{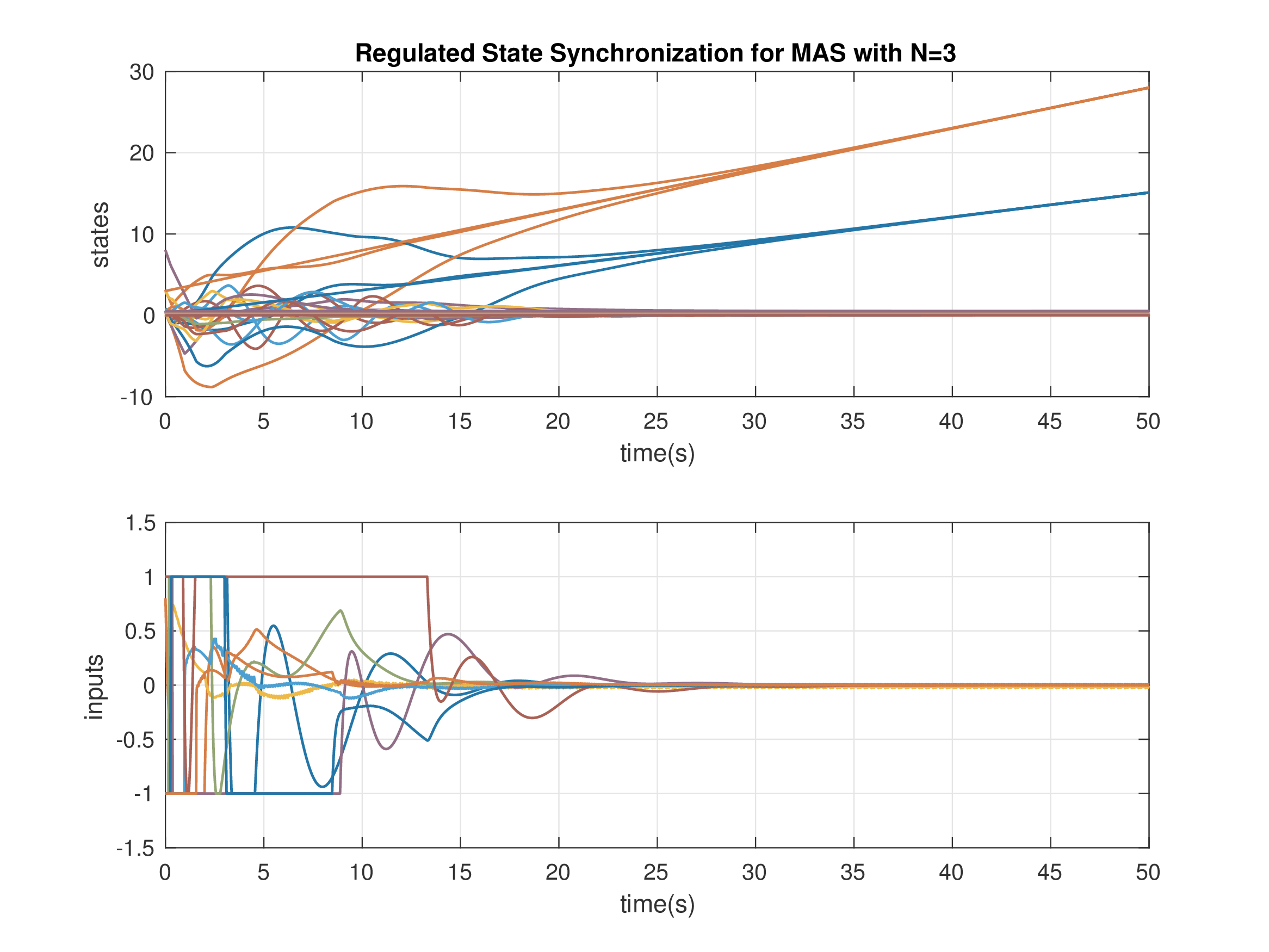}
	\centering
	\caption{Regulated state synchronization for MAS with mixed-case agents, partial-state coupling and $3$ agents}\label{results_mixed_case3Nodes}
\end{figure}
\begin{figure}[t]
	\includegraphics[width=8cm, height=6.5cm]{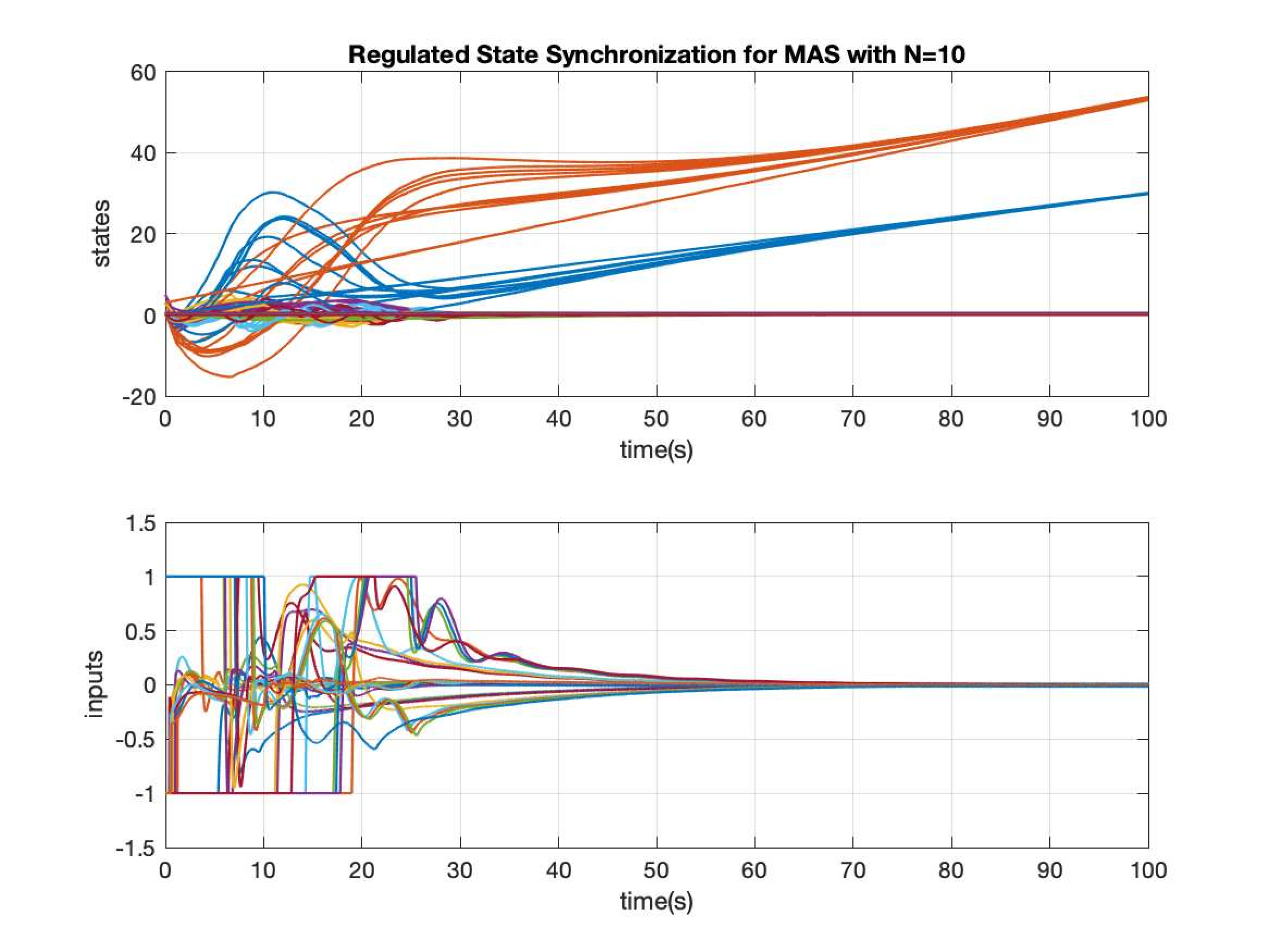}
	\centering
	\caption{Regulated state synchronization for MAS with mixed-case agents, partial-state coupling and $10$ agents}\label{results_mixed_case10Nodes}
\end{figure}

We choose parameter $\rho=1$ and matrix $K$ and $F$ as following:
\begin{align*}
&F= \begin{pmatrix}
0.55&6.81&0.73&-0.42\\
7.97&-7.41 &1.30&-8.30\\
0.57&10&2.97&0.37\\
11.14&-10.32&5.06&-11.24\\
-5.92&-0.92&3.66&7.89\\
-7.01&1.98&-14.49&8.53\\
1.35&-0.27&8.48&-1.52
\end{pmatrix}\\
&K=\begin{pmatrix}
-1&0&-4&6&-22&-1&1\\
 -2&-1&-3&-2&18&0&1\\
-4&-6&-5&-3&-61&-1&0
\end{pmatrix}
\end{align*}

Consider a MAS  with $3$ agents, and associated directed communication topology shown in Figure \ref{graph_3nodes}.

The simulation results for global state synchronization of the MAS with partial-state coupling via scalable dynamic protocol \eqref{pscpm2} are shown in Figure \ref{results_mixed_case3Nodes}.

 To show the scalability of our protocol designs, we consider a MAS with $10$ nodes and agent models as the previous case with communication topology as Figure \ref{graph_10nodes}. 
 
 The simulation results shown in Figure \ref{results_mixed_case10Nodes} show that global state synchronization is achieved with the same designed protocol.

\bibliographystyle{plain}
\bibliography{referenc}
\end{document}